\providecommand{\U}[1]{\protect\rule{.1in}{.1in}}
\newtheorem{theorem}{Theorem}
\newtheorem{proposition}{Proposition}
\newtheorem{remark}{Remark}
\newenvironment{proof}[1][Proof]{\textbf{#1.} }{\  \rule{0.5em}{0.5em}}
\def \@removefromreset#1#2{\let \@tempb \@elt
\def \@tempa#1{@&#1}\expandafter \let \csname @*#1*\endcsname \@tempa
\def \@elt##1{\expandafter \ifx \csname @*##1*\endcsname \@tempa \else
\noexpand \@elt{##1}\fi}     \expandafter \edef \csname cl@#2\endcsname{\csname cl@#2\endcsname}     \let \@elt \@tempb
\expandafter \let \csname @*#1*\endcsname \@undefined}
\begin{document}

\title{Conclusive discrimination by $N$ sequential receivers between $r\geq2$
arbitrary quantum states}
\author{Elena R. Loubenets$^{1,2}$ and Min Namkung$^{1,3}$\\$^{1}$National Research University Higher School of Economics, \\Moscow 101000, Russia\\$^{2}$Steklov Mathematical Institute of Russian Academy of Sciences, \\Moscow 119991, Russia \\$^{3}$Department of Applied Mathematics and Institute of Natural Sciences, \\Kyung Hee University, Yongin, 17104, Republic of Korea}
\maketitle

\begin{abstract}
In the present article, we develop a general framework for the description of
discrimination between $r\geq2$ quantum states by $N\geq1$ sequential
receivers in the case where each receiver obtains a conclusive result. This
type of discrimination constitutes an $N$-sequential extension of the
minimum-error discrimination\ by one receiver. The developed general
framework, which is valid for a conclusive discrimination between any number
$r\geq2$ of arbitrary quantum states, pure or mixed, of an arbitrary dimension
and any number $N\geq1$ of sequential receivers, is based on the notion of a
quantum state instrument and this allows us to derive the new important
general results. We, in particular, find a general condition on $r\geq2$
quantum states, under which, within the strategy where all types of receivers'
quantum measurements are allowed, the optimal success probability is equal to
that of the first receiver for any number $N\geq2$ of further sequential
receivers and specify the corresponding optimal protocol. We show that, for
two arbitrary quantum states, this sufficient condition is always fulfilled so
that, within the above strategy, the optimal success probability of an
$N$-sequential conclusive discrimination between two arbitrary quantum states
is given by the Helstrom bound for any number of sequential receivers, and
specify the optimal protocols where this bound is attained. Each of these
optimal protocols is general in the sense that it is true for any two quantum
states, pure or mixed, and of an arbitrary dimension. Furthermore, we extend
our general framework to include an $N$-sequential conclusive discrimination
between $r\geq2$ arbitrary quantum states under a noisy communication. As an
example, we analyze analytically and numerically a two-sequential conclusive
discrimination between two qubit states via depolarizing quantum channels. The
derived new general results are important both from the theoretical point of
view and for the development of a successful multipartite quantum
communication via noisy quantum channels.

\end{abstract}

\section{Introduction}

Constructing quantum measurements for distinguishing between quantum states
with {an} optimal value of a figure of merit is one of {the key problems of
quantum information processing,} which is {referred to as a quantum state
discrimination \cite{c.w.helstrom,a.s.holevo1}.} {For} {a }quantum state
discrimination, various {strategies} {based on} the constraints imposed by the
quantum measurement theory can be {proposed}.

In 2013, J. A. Bergou {\cite{j.a.bergou}} firstly devised the sequential
unambiguous state discrimination {where} a receiver discriminates {a} sender's
quantum state and a further receiver discriminates posterior states from the
{previous} receiver. However, in \cite{j.a.bergou}, every receiver performs
{the} {so called unambiguous state discrimination where }each receiver's
conclusive result is always confident, even though {this} receiver can obtain
{the} inconclusive result with a non-zero probability, and until now there
have been numerous theoretical and experimental developments
\cite{c.-q.pang,j.-h.zhang,m.namkung,m.namkung2,m.hillery,m.namkung3,m.a.solis-prosser,m.namkung4}
on only this type of sequential state discrimination.

Unfortunately, the unambiguous state discrimination can be performed if only
(i) in the case of pure initial states, they are linearly independent; and
(ii) in the case of mixed initial states, a support space spanned by
eigenvectors of nonzero eigenvalue of one mixed state is not identical to that
of any other mixed state \cite{t.rudolph}. Therefore, another strategy for a
sequential state discrimination should be devised.

Recently, another\textbf{ }scenario\textbf{ }for a sequential state
discrimination has been proposed \cite{d.fields} where each receiver's quantum
measurement is performed on a quantum system in the conditional posterior
state after a measurement of the previous receiver and outputs only a
conclusive result. The latter means that receivers' quantum measurements are
designed to remove a possibility for each receiver to obtain a result
\textquotedblleft I don't know which quantum state was
prepared\textquotedblright. In view of this, we call this discrimination
scenario as a sequential conclusive quantum state discrimination. However, the
analytical and numerical results derived in \cite{d.fields} refer only to the
$N$-sequential conclusive discrimination between two pure qubit states under
receivers' generalized quantum measurements with some specific conditional
posterior states\footnote{For the constraint used in \cite{d.fields} on
receivers' quantum measurements, see section 2 of that paper.}.

In the present article, we develop a general framework for description of an
$N$-sequential conclusive discrimination between any number $r\geq2$ arbitrary
quantum states, pure or mixed, under any receivers' generalized quantum
measurements. This type of quantum state discrimination constitutes an
$N$-sequential extension of the minimum-error discrimination by one receiver
and the developed general framework incorporates the description of
$N$-sequential discrimination considered in \cite{d.fields} only as a
particular case.

Mathematically the new general framework for an $N$-sequential conclusive
state discrimination is based on the notion of a quantum state instrument
describing a consecutive measurement of $N\geq1$ receivers and this allows us
to derive three mutually equivalent representations for the success
probability via correspondingly: (i) a quantum state instrument under a
consecutive measurement by $N$ receivers; (ii) the POV measures of all $N$
receivers and conditional posterior states after each sequential measurement
and (iii) the product of the success probabilities of all $N$ receivers and to
present for the optimal success probability a general upper bound expressed
explicitly in terms of $r\geq2$ quantum states and their a priori probabilities.

We find a new general condition on $r\geq2$ quantum states sufficient for the
optimal success probability to be equal to the optimal success probability of
the first receiver and specify for this case the optimal protocol. We show
that, for the $N$-sequential conclusive discrimination between two $(r=2)$
arbitrary quantum states, this general condition is always fulfilled,
therefore, in this case, the optimal success probability is given by the
Helstrom bound for any number $N\geq1$ of sequential receivers and specify the
optimal protocols where this bound is attained. Each of these optimal
protocols is general in the sense that it is true for the $N$-sequential
conclusive sequential discrimination between any two quantum states, pure or
mixed, and of an arbitrary dimension.

Furthermore, we extend our general framework to include an $N$-sequential
conclusive state discrimination under communication via noisy quantum
channels. As an example, we apply the developed formalism for the analysis of
the two-sequential conclusive discrimination between two qubit states in the
presence of a depolarizing noise.

The developed general mathematical framework is true for any number $N$ of
sequential receivers, any number $r\geq2$ of arbitrary quantum states, pure or
mixed, any type of receivers' quantum measurements and for arbitrary noisy
quantum communication channels is important both theoretically and for the
development of a successful practical multipartite quantum communication via
noisy channels.

The new general results derived in the present article are important both from
the theoretical point of view and for the development of a successful
multipartite quantum communication via noisy channels.

The developed general framework is true for any number $N$ of sequential
receivers, any number of arbitrary quantum states, pure or mixed, to be
discriminated, and all types of receivers' quantum measurements and arbitrary
noisy quantum communication channels.

The present article is organized as follows.

In Section 2, we present the main issues of the quantum measurement theory,
specify the notions of a quantum state instrument and its statistical realizations.

In Section 3, we develop a new general framework for the description of an
$N$-sequential conclusive discrimination between $r\geq2$ arbitrary quantum
states prepared with any a priori probabilities.

In Section 4, we specify the optimal success probability under a definite
strategy for an $N$-sequential conclusive state discrimination and establish a
new upper bound on this probability. For the strategy where all possible
receivers' quantum measurements are allowed, we find a new general condition
on $r\geq2$ quantum states sufficient for the optimal success probability of
the $N$-sequential conclusive state discrimination to be equal to the optimal
success probability of the first receiver for any number $N\geq2$ of
receivers. We show that, in case $r=2$, this general condition is fulfilled
for any two quantum states, possibly infinite dimensional, therefore, within
the strategy where all possible receivers' quantum measurements are allowed,
the optimal success probability for the $N$-sequential conclusive state
discrimination between two arbitrary quantum states is given by the Helstrom
bound for any number $N\geq2$ of sequential receivers and specify two possible
optimal protocols where this bound is attained and which are general in the
sense that they are true for the discrimination between any two quantum
states, pure or mixed and of an arbitrary dimension.

In Section 5, we develop a general framework for an $N$-sequential conclusive
discrimination between $r\geq2$ arbitrary quantum states via noisy
communication channels.

In Section 6, we analyze analytically and numerically a two-sequential
conclusive discrimination between two qubit states via depolarizing quantum channels.

In Section 7, we summarize the main results.

\section{Preliminaries}

For the description in Section 3, 4 of a general scenario for an
$N$-sequential conclusive state discrimination, we shortly recall the
main\ notions of the quantum measurement theory
\cite{a.s.holevo1,e.b.davies,p.busch}. Let a measurement with outcomes
$\omega$ in a \emph{finite} set $\Omega$ be performed on a quantum system
described in terms of a complex Hilbert space $\mathcal{H}$. Denote by
$\mathcal{L}\mathfrak{(}\mathcal{H})$ the vector space of all bounded linear
operators on $\mathcal{H}$ and by $\mathcal{T}\mathfrak{(}\mathcal{H}%
)\subseteq\mathcal{L}\mathfrak{(}\mathcal{H})$ -- the vector space of all
trace class operators on $\mathcal{H}.$

The \emph{complete description }of every quantum measurement, projective or
generalized, includes the knowledge of both -- statistics of observed outcomes
and a family of posterior states, each conditioned by an outcome observed
under a single trial of this measurement. In mathematical terms, the complete
description of a quantum measurement is specified
\cite{a.s.holevo1,e.b.davies,p.busch,a.s.holevo2} by the notion of a state
instrument $\mathcal{M=\{M(\omega)},\omega\in\Omega\mathcal{\}}$ with values
$\mathcal{M(}\omega)$ that are completely positive bounded linear maps
$\mathcal{M(}\omega)[\cdot]:\mathcal{T}\mathfrak{(}\mathcal{H})\rightarrow
\mathcal{T}\mathfrak{(}\mathcal{H}),$ satisfying the relation
\begin{equation}
\sum_{\omega\in\Omega}\mathrm{tr\{}\mathcal{M}\mathcal{(}\omega
)[T]\}=\mathrm{tr\{}\mathcal{M}\mathcal{(}\Omega)[T]\}=\mathrm{tr}%
\{T\},\mathit{\ \ \ }T\in\mathcal{T}\mathfrak{(}\mathcal{H}). \label{1.1}%
\end{equation}
To each state instrument $\mathcal{M}$, there corresponds the unique
observable instrument $\mathcal{N=\{N(}\omega),\omega\in\Omega\}$ with values
$\mathcal{N(}\omega)$ that are \emph{normal} completely positive bounded
linear maps $\mathcal{L}\mathfrak{(}\mathcal{H})\rightarrow\mathcal{L}%
\mathfrak{(}\mathcal{H})$ defined to a state instrument $\mathcal{M}$ via the
duality relation
\begin{align}
&  \mathrm{tr}\left\{  \mathcal{M(}\omega)[T]Y\right\}  =\mathrm{tr}\left\{
T\mathcal{N}(\omega)[Y]\right\}  ,\mathit{\ \ }\omega\in\Omega,\label{2.1}\\
&  T \in\mathcal{T}\mathfrak{(}\mathcal{H}),\mathit{\ }Y\in\mathcal{L}%
\mathfrak{(}\mathcal{H}),\nonumber
\end{align}
and the values of the POV measure $\mathrm{M}=\mathcal{\{}\mathrm{M}%
\mathcal{(\omega)},\omega\in\Omega\mathcal{\}},$ $\sum_{\omega\in\Omega
}\mathrm{M}_{\mathcal{M}}(\omega)=\mathbb{I}_{\mathcal{H}}$, describing the
statistics of this quantum measurement are given by%
\begin{equation}
\mathrm{M}(\omega):=\mathcal{N}(\omega)[\mathbb{I}_{\mathcal{H}}]. \label{4}%
\end{equation}

Given a state instrument $\mathcal{M}$ of a quantum measurement and a quantum
system state $\rho$ (density operator) on $\mathcal{H}$ before this
measurement, the probability to observe under this measurement an outcome
$\omega$ in a subset $F\subseteq\Omega$ has the form%
\begin{equation}
\mu(F\text{ }|\rho)=\mathrm{tr}\left\{  \mathcal{M(}F)[\rho]\right\}
=\mathrm{tr}\left\{  \rho\mathcal{N}(F)[\mathbb{I}_{\mathcal{H}}]\right\}
=\mathrm{tr}\{\rho\mathrm{M}(F)\}. \label{4_1}%
\end{equation}

After a single measurement trial, where an outcome $\omega\in\Omega$ is
observed, the state of a quantum system is given by the relation
\cite{a.s.holevo2,e.r.loubenets,o.e.barndoff-nielson}
\begin{equation}
\rho_{out}(\omega|\rho):=\frac{\mathcal{M(}\omega)[\rho]}{\mu(\omega|\rho)}
\label{5}%
\end{equation}
and is called a conditional posterior state. The unconditional posterior state
is defined by $\rho_{out}(\rho):=\rho_{out}(\Omega|\rho)=\mathcal{M(}%
\Omega)[\rho].$

Every state instrument $\mathcal{M(}\cdot)[\cdot]$ admits the
Stinespring-Kraus representation\footnote{For details, see
\cite{a.s.holevo2,e.r.loubenets,o.e.barndoff-nielson,shirokov1, shirokov2} and
references therein.}%
\begin{equation}
\mathcal{M(}\omega)[T]=\sum_{l}K_{l}(\omega)TK_{l}^{\dagger}(\omega
),\mathit{\ \ }T\in\mathcal{T}\mathfrak{(}\mathcal{H}),\mathit{\ \ }\omega
\in\Omega, \label{7}%
\end{equation}
which may be not unique. Here, $K_{l}(\omega)\in\mathcal{L}\mathfrak{(}%
\mathcal{H}),$ $\omega\in\Omega,$ $l\in\{1,...,L\}$, are bounded linear
operators with the operator norms $\left\Vert K_{l}(\omega)\right\Vert \leq1$,
called the Kraus operators \cite{Kraus} and satisfying the relation%
\begin{equation}
\sum_{\omega,l}K_{l}^{\dagger}(\omega)K_{l}(\omega)=\mathbb{I}_{\mathcal{H}}.
\label{8}%
\end{equation}
From (\ref{2.1}) and (\ref{8}) it follows that, in terms of Kraus operators,
for each $\omega\in\Omega,$%
\begin{align}
&  \mathrm{M}(\omega)=\sum_{l}K_{l}(\omega)^{\dagger}K_{l}(\omega),\label{9}\\
&  \mu(\omega|\rho)=\sum_{l}\mathrm{tr}\left\{  \rho K_{l}^{\dagger}%
(\omega)K_{l}(\omega)\right\}  ,\label{10}\\
&  \rho_{out}(\omega|\rho)=\frac{\sum_{l}K_{l}(\omega)\rho K_{l}^{\dagger
}(\omega)}{\mu(\omega|\rho)}. \label{11}%
\end{align}
If, in representation (\ref{7}), a set $\{1,...,L\}$ contains only one
element, that is, Kraus operators are labeled only by outcomes $\omega
\in\Omega,$ then a state instrument $\mathcal{M}$ is called pure, since in
this case, mapping $\mathcal{M(\omega)[\cdot]}$ "transforms" a pure initial
state $|\psi\rangle\langle\psi|$ to the pure conditional posterior state
\begin{equation}
\rho_{out}(\omega|\text{ }|\psi\rangle\langle\psi|)=\frac{K(\omega
)|\psi\rangle\langle\psi|K^{\dagger}(\omega)}{\langle\psi|K^{\dagger}%
(\omega)K(\omega)|\psi\rangle},\mathit{\ \ \ }\omega\in\Omega. \label{12}%
\end{equation}

As proved by Ozawa \cite{m.ozawa}, for every observable instrument
$\mathcal{N}$, describing a generalized quantum measurement, there exists a
statistical realization
\begin{equation}
\Xi:=\{\widetilde{\mathcal{H}},\sigma,\mathrm{P},U\}, \label{13}%
\end{equation}
possibly, not unique, consisting of 4 elements: a complex Hilbert space
$\widetilde{\mathcal{H}},$ a density operator $\sigma$ on
$\widetilde{\mathcal{H}},$ a projection-valued measure $\mathrm{P}$ on
$\Omega$ with values $\mathrm{P}(\omega),\omega\in\Omega,$ that are
projections on $\widetilde{\mathcal{H}}$ and a unitary operator $U$ on
$\mathcal{H}\otimes\widetilde{\mathcal{H}}$, such that, for all $Y\in
\mathcal{L}\mathfrak{(}\widetilde{\mathcal{H}}),$%
\begin{equation}
\mathcal{N(}\omega)[Y]=\mathrm{tr}_{\widetilde{\mathcal{H}}}\left\{  \left(
\mathbb{I}_{\mathcal{H}}\otimes\sigma\right)  U^{\dagger}(Y\otimes
\mathrm{P}(\omega))U\right\}  ,\mathit{\ \ }\omega\in\Omega, \label{14}%
\end{equation}
where notation $\mathrm{tr}_{\widetilde{\mathcal{H}}}\{\cdot\}$ means the
partial trace over a Hilbert space $\widetilde{\mathcal{H}}.$ From relations
(\ref{2.1}) and (\ref{14}) it follows that, in terms of the elements of a
statistical realization (\ref{13}), the values $\mathcal{M(}\omega)[\cdot]$ of
a state instrument are given by%
\begin{equation}
\mathcal{M(}\omega)[\rho]=\mathrm{tr}_{\widetilde{\mathcal{H}}}\left\{
\left(  \mathbb{I}_{\mathcal{H}}\otimes\mathrm{P}(\omega)\right)
U(\rho\otimes\sigma)U^{\dagger}\left(  \mathbb{I}_{\mathcal{H}}\otimes
\mathrm{P}(\omega)\right)  \right\}  ,\mathit{\ \ }\omega\in\Omega, \label{15}%
\end{equation}
for all density operators $\rho$ on $\mathcal{H}.$

\begin{remark}
The existence for every generalized quantum measurement of a statistical
realization $\Xi=\{\widetilde{\mathcal{H}},\sigma,P,U\}$ means that each
generalized quantum measurement on a state $\rho$ on $\mathcal{H}$ can be
realized via the indirect measurement, specified by the elements of this
statistical realization $\Xi.$ Namely, via a direct measurement $P(\cdot)$ on
some auxiliary quantum system, being initially in a state $\sigma$ on a
Hilbert space $\widetilde{\mathcal{H}}$, after its interaction with the
original system which results in the composite system state $U(\rho
\otimes\sigma)U^{\dagger}$ on $\mathcal{H}\otimes\widetilde{\mathcal{H}}$.
\end{remark}

Representations (\ref{7}) and (\ref{15}) imply that if, for a state instrument
$\mathcal{M}$, there exists a statistical realization $\Xi$ where a state
$\sigma=|b\rangle\langle b|$ is pure while the values of a projection-valued
measure $\mathrm{P}$ have the form $\mathrm{P}(\omega)=|\xi_{\omega}%
\rangle\langle\xi_{\omega}|,$ where $\{|\xi_{\omega}\rangle,$ $\omega\in
\Omega\}$ is an orthonormal basis of $\widetilde{\mathcal{H}},$ then, for this
state instrument $\mathcal{M},$ there is representation (\ref{7}) where the
Kraus operators are labeled only by an outcome $\omega\in\Omega$ and are
defined via the relations (see Lemma 1 in \cite{e.r.loubenets}):
\begin{align}
\ \ \ \ \ \ \ \ \ \ \ \ \ \ \  &  U(|\psi\rangle\otimes|b\rangle)
=\sum_{\omega}K(\omega)|\psi\rangle\otimes|\xi_{\omega}\rangle,\text{
\ \ \ for each }|\psi\rangle\in\mathcal{H},\label{16}\\
&  \left(  \langle f|\otimes\langle\xi_{\omega}|\right)  U(|g\rangle
\otimes|b\rangle) =\langle f|K(\omega)g\rangle,\text{ \ \ \ \ \ for all
\ }|f\rangle,|g\rangle\in\mathcal{H}.\nonumber
\end{align}
In the physical notation,
\begin{equation}
K(\omega)=\langle\xi_{\omega}|U|b\rangle. \label{17}%
\end{equation}

\section{$N$-sequential conclusive discrimination between $r\geq2$ quantum
states}

For an $N$-sequential conclusive quantum state discrimination, introduce a
general scenario where a sender, say Alice, prepares a quantum system,
described in terms of a complex Hilbert space $\mathcal{H},$ \emph{possibly
infinite-dimensional,} in one of states $\rho_{1},...,\rho_{r},$ $r\geq2,$
pure or mixed, with a priori probabilities $q_{1},...,q_{r}$ and sends her
quantum system in the initial state
\begin{equation}
\rho_{in}=\sum_{j=1,...,r}q_{j}\rho_{j},\mathit{\ \ \ }\sum_{j}q_{j}%
=1,\mathit{\ \ \ }q_{j}>0, \label{18}%
\end{equation}
through a quantum channel to a chain of $N$ receivers. On receiving a quantum
system of Alice, the first receiver is allowed to perform on this system a
conclusive measurement for the discrimination between states $\{\rho
_{1},...,\rho_{r}\}$ while a quantum system of Alice in the posterior state,
conditioned by an outcome $j_{1}\in\{1,...,r\}$ observed by the first
receiver,\textbf{ }is further transmitted via a quantum channel to a
sequential receiver. This procedure is repeated until a conclusive measurement
by an $N$-th receiver.

We assume that either beforehand or during a sequential discrimination a
communication between receivers via a separate classical channel is prohibited
\cite{j.a.bergou}, the same concerns also a communication between receivers
via an extra quantum channel (by encoding the information on their outcomes
into orthogonal quantum states) -- since this allows the trivial strategy
where the first receiver performs an optimal measurement and communicates the
information on his outcome to sequential receivers.

Under the above constraints, the $N$-sequential conclusive state
discrimination scenario described below can provide a practical multipartite
quantum communication protocol.

Denote by $\mathcal{M}_{n}^{(r)}\mathcal{(}\cdot)\ (n=1,...,N)$ a state
instrument describing a conclusive quantum measurement of an $n$-th sequential
receiver with outcomes $j_{n}$ in set $\{1,...,r\}$. The corresponding
observable instrument $\mathcal{N}_{n}^{(r)}\mathcal{(}\cdot)$ and the POV
measure $\mathrm{M}_{n}^{(r)}$ with values in set $\{1,...,r\}$ are specified
by Eqs. (\ref{1.1})--(\ref{2.1}) in Section 2.

Let quantum channels between sequential receivers be ideal. The sequence
$A|\rightarrow1\rightarrow...\rightarrow k$ of quantum measurements performed
by $k\in\{1,...,N\}$ receivers, each with an outcome $j_{k}$ $\in\{1,...,r\}$,
constitutes a consecutive measurement with an outcome $\omega=\left(
j_{1},...,j_{k}\right)  \in\{1,...,r\}^{k}$ on the Alice quantum system in an
initial state (\ref{18}) and is described by the state instrument
$\mathcal{M}_{A|\rightarrow1\rightarrow...\rightarrow k}^{(r)}(\cdot)$ with
values \cite{a.s.holevo2}
\begin{align}
\mathcal{M}_{A|\rightarrow1\rightarrow...\rightarrow k}^{(r)}\left(
j_{1},...,j_{k}\right)  [\cdot]  &  :=\mathcal{M}_{k}^{(r)}\mathcal{(}%
j_{k})\left[  \mathcal{M}_{k-1}^{(r)}\mathcal{(}j_{k-1})\left[  \cdots
\mathcal{M}_{1}^{(r)}\mathcal{(}j_{1})\left[  \cdot\right]  ...\right]
\right]  ,\label{20}\\
&  \left(  j_{1},...,j_{k}\right)  \in\{1,...,r\}^{k}.\nonumber
\end{align}
By Eqs. (\ref{1.1})--(\ref{2.1}) the corresponding observable instrument
$\mathcal{N}_{A|\rightarrow1\rightarrow...\rightarrow k}^{(r)}$ and the POV
measure $\mathrm{M}_{A|\rightarrow1\rightarrow...\rightarrow k}$ of
${A|\rightarrow1\rightarrow...\rightarrow k}$ consecutive measurement
described by the state instrument (\ref{20}) have the forms%
\begin{align}
\mathcal{N}_{A|\rightarrow1\rightarrow...\rightarrow k}^{(r)}(j_{1}%
,...,j_{k})[\cdot]  &  :=\mathcal{N}_{1}^{(r)}\mathcal{(}j_{1})\left[
\mathcal{N}_{2}^{(r)}\mathcal{(}j_{2})\left[  \cdots\mathcal{N}_{N}%
^{(r)}\mathcal{(}j_{k})[\mathbb{\cdot}]...\right]  \right]  ,\label{21}\\
\mathrm{M}_{A|\rightarrow1\rightarrow...\rightarrow k}^{(r)}(j_{1},...,j_{k})
&  =\mathcal{N}_{A|\rightarrow1\rightarrow...\rightarrow k}^{(r)}%
(j_{1},...,j_{k})[\mathbb{I}_{\mathcal{H}}]. \label{22}%
\end{align}
From relations (\ref{20}) and (\ref{22}) it follows that, for an input state
(\ref{18}) before this consecutive measurement $A|\rightarrow1\rightarrow
...\rightarrow k$, the probability to receive under this measurement an
outcome $\omega=\left(  j_{1},...,j_{k}\right)  \in\{1,...,r\}^{k}$ is equal
to%
\begin{align}
\mu_{\mathcal{M}_{A|\rightarrow1\rightarrow...\rightarrow k}^{(r)}}%
(j_{1},...,j_{k}  &  \mid\rho_{in})=\sum_{j}q_{j}\mathrm{tr}\left\{
\mathcal{M}_{A|\rightarrow1\rightarrow...\rightarrow k}^{(r)}(j_{1}%
,...,j_{k})[\rho_{j}]\right\} \label{23}\\
&  =\sum_{j}q_{j}\mu_{\mathcal{M}_{A|\rightarrow1\rightarrow...\rightarrow
k}^{(r)}}(j_{1},...,j_{k}\mid\rho_{j}),\nonumber
\end{align}
where
\begin{align}
\mu_{\mathcal{M}_{A|\rightarrow1\rightarrow...\rightarrow k}^{(r)}}%
(j_{1},...,j_{k}  &  \mid\rho_{j})=\mathrm{tr}\left\{  \mathcal{M}%
_{A|\rightarrow1\rightarrow...\rightarrow k}^{(r)}(j_{1},...,j_{k})[\rho
_{j}]\right\} \label{23.1}\\
&  =\mathrm{tr}\{\rho_{j}\mathrm{M}_{A|\rightarrow1\rightarrow...\rightarrow
k}^{(r)}(j_{1},...,j_{k})\}.\nonumber
\end{align}

Therefore, within an $N$-sequential conclusive quantum state discrimination,
the probability for $N$ receivers to take the proper decisions on
discriminating between states $\rho_{1},...,\rho_{r},$ given with a priori
probabilities $q_{1},...,q_{r},$ \emph{i.e. the success probability }under a
consecutive measurement\emph{ }$A|\rightarrow1\rightarrow...\rightarrow N$,
has the form
\begin{align}
\mathbb{P}_{\mathcal{M}_{A|\rightarrow1\rightarrow...\rightarrow N}^{(r)}%
}^{success}(\rho_{1},...,\rho_{r}  &  \mid q_{1},...,q_{r})=\sum
_{j=1,...,r}q_{j}\mathrm{tr}\{\mathcal{M}_{A|\rightarrow1\rightarrow
...\rightarrow N}^{(r)}(\text{ }\overbrace{j,...,j}^{N})[\rho_{j}%
]\}\label{25}\\
&  =\sum_{j=1,...,r}q_{j}\mathrm{tr}\{\rho_{i}\mathrm{M}_{A|\rightarrow
1\rightarrow...\rightarrow N}^{(r)}(\text{ }\overbrace{j,...,j}^{N}%
)\},\nonumber
\end{align}
where $\mathrm{M}_{A|\rightarrow1\rightarrow...\rightarrow N}^{(r)}$ is the
POV measure (\ref{22}).

Besides representation (\ref{25}), let us also introduce two other equivalent
representations for the success probability $\mathrm{P}_{\mathcal{M}%
_{A|\rightarrow1\rightarrow...\rightarrow N}^{(r)}}^{success}$. Denote by
\begin{equation}
\tau_{out}^{(k)}(\overbrace{j,...,j}^{k}|\rho_{j}),\mathit{\ \ \ }k\geq1,
\label{26}%
\end{equation}
the posterior state on $\mathcal{H}$ conditioned by an outcome
$(\overbrace{j,...,j}^{k})\in\{1,...,r\}^{k}$ observed under the consecutive
measurement $\mathcal{M}_{A|\rightarrow1\rightarrow...\rightarrow k}$ on a
state $\rho_{j}$. In view of relations (\ref{5}), (\ref{20}) and (\ref{23}),
we have%
\begin{equation}
\tau_{out}^{(k)}(\overbrace{j,...,j}^{k}|\rho_{j})=\frac{\mathcal{M}%
_{A|\rightarrow1\rightarrow...\rightarrow(k-1)}^{(r)}(\overbrace{j,...,j)}%
^{k}[\rho_{j}]}{\mu_{\mathcal{M}_{A|\rightarrow1\rightarrow...\rightarrow
(k-1)}^{(r)}}(\underbrace{j,...,j}_{k}|\rho_{j})} \label{27}%
\end{equation}
and%
\begin{align}
&  \mathcal{M}_{A|\rightarrow1\rightarrow...\rightarrow k}^{(r)}%
(\overbrace{j,...,j)}^{k}[\rho_{j}]=\mathcal{M}_{k}^{(r)}\mathcal{(}j)\left[
\mathcal{M}_{A|\rightarrow1\rightarrow...\rightarrow(k-1)}^{(r)}%
(\overbrace{j,...,j)}^{k-1}[\rho_{j}]\right] \label{28}\\
&  =\mu_{\mathcal{M}_{A|\rightarrow1\rightarrow...\rightarrow(k-1)}^{(r)}%
}(\overbrace{j,...,j}^{k-1}|\rho_{i})\times\mathcal{M}_{k}^{(r)}%
\mathcal{(}j)[\tau_{out}^{(k-1)}(\overbrace{j,...,j}^{(k-1)}|\rho
_{j})].\nonumber
\end{align}
Since a length of a tuple $(j,...,j)\in\{1,...,r\}^{k}$ is equal to a number
$k$ standing in indices at the probability $\mu$ and at the posterior state
$\tau_{out}^{(k)}$, for short, we below omit the upper decoration at
$\overbrace{j,...,j}^{k}.$

By Eq. (\ref{28}), for all $k\geq2,$ we come to the following representation
of the success probability $\mathrm{P}_{\mathcal{M}_{A|\rightarrow
1\rightarrow...\rightarrow k}^{(r)}}^{success}$ under a consecutive
measurement\emph{ }$A|\rightarrow1\rightarrow...\rightarrow k:$%
\begin{align}
\text{ }\mathbb{P}_{\mathcal{M}_{A|\rightarrow1\rightarrow...\rightarrow
k}^{(r)}}^{success}(\rho_{1},...,\rho_{r}  &  \mid q_{1},...,q_{r}%
)\label{29}\\
&  =\sum_{j=1,...,r}q_{j}\mu_{\mathcal{M}_{A|\rightarrow1\rightarrow
...\rightarrow(k-1)}^{(r)}}(j,...,j|\rho_{j})\times\mathrm{tr}\left\{
\mathcal{M}_{k}^{(r)}\mathcal{(}j)\left[  \tau_{out}^{(k-1)}(j,...,j|\rho
_{j})\right]  \right\}  .\nonumber
\end{align}
Relations (\ref{23.1}) and (\ref{29}) imply\ the following general statement.

\begin{proposition}
Under an $N$-sequential conclusive discrimination between $r\geq2$ arbitrary
states $\rho_{1},...,\rho_{r}$, described by a state instrument $\mathcal{M}%
_{A|\rightarrow1\rightarrow...\rightarrow N}^{(r)},$ $N\geq2,$ the success
probability\textbf{ }$\mathbb{P}_{\mathcal{M}_{A|\rightarrow1\rightarrow
...\rightarrow N}^{(r)}}^{success}$ admits representation (\ref{25}) and also
the following representations equivalent to (\ref{25}):%
\begin{align}
&  \mathbb{P}_{\mathcal{M}_{A|\rightarrow1\rightarrow...\rightarrow N}^{(r)}%
}^{success}(\rho_{1},...,\rho_{r}\mid q_{1},...,q_{r})\label{30}\\
&  =\sum_{j=1,...,r}q_{j}\mathrm{tr}\{\rho_{j}\mathrm{M}_{1}^{(r)}(j)\}\text{
}\mathrm{tr}\{\tau_{out}^{(1)}(j|\rho_{j})\mathrm{M}_{2}^{(r)}(j)\}\times
\cdots\times\mathrm{tr}\{\tau_{out}^{(N-1)}(j,...,j\text{ }|\rho
_{j})\mathrm{M}_{N}^{(r)}(j)\},\nonumber\\
& \nonumber\\
&  \mathbb{P}_{\mathcal{M}_{A|\rightarrow1\rightarrow...\rightarrow N}^{(r)}%
}^{success}(\rho_{1},...,\rho_{r}\mid q_{1},...,q_{r})=\mathbb{P}%
_{\mathcal{M}_{1}^{(r)}}^{success}(\rho_{1},...,\rho_{r}\mid q_{1}%
,...,q_{r})\label{31}\\
&  \times%
{\displaystyle\prod\limits_{n=2,...,N}}
\mathbb{P}_{\mathcal{M}_{n}^{(r)}}^{success}\left(  \tau_{out}^{(n-1)}%
(1,...,1|\rho_{1}),...,\tau_{out}^{(n-1)}(r,...,r|\rho_{r})\mid Q_{n-1}%
^{(1)},...,Q_{n-1}^{(r)}\right)  .\nonumber
\end{align}
Here: (i) $\mathrm{M}_{n}^{(r)}(j):=\mathcal{N}_{n}(j)[\mathbb{I}%
_{\mathcal{H}}]$ is the POV measure describing a measurement of each
sequential receiver; (ii) $\tau_{out}^{(n-1)}(j,...,j|\rho_{j}),$ $n\geq2,$ is
the posterior state after $(n-1)$-th measurement, each originated from a state
$\rho_{j}$ and conditioned by the outcome \textquotedblleft$j$%
\textquotedblright\ under the conclusive measurements of all previous
receivers; (iii)
\begin{equation}
Q_{n-1}^{(j)}:=\frac{q_{n-1}\mu_{\mathcal{M}_{A|\rightarrow1\rightarrow
...\rightarrow(n-1)}^{(r)}}\left(  j,...,j|\rho_{j}\right)  }{\mathbb{P}%
_{\mathcal{M}_{A|\rightarrow1\rightarrow...\rightarrow(n-1)}^{(r)}}%
^{success}(\rho_{1},...,\rho_{r}\mid q_{1},...,q_{r})},\mathit{\ \ \ }%
\sum_{j=1,...,r}Q_{n-1}^{(j)}=1,\mathit{\ \ }n\geq2, \label{32}%
\end{equation}
is a priori probability of the posterior state $\tau_{out}^{(n-1)}%
(j,...,j|\rho_{j})$ before a measurement of each $n\geq2$ sequential receiver.
\end{proposition}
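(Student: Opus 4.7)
The plan is to derive both (\ref{30}) and (\ref{31}) by unfolding the composition structure (\ref{20}) of the consecutive state instrument via the recursion (\ref{28}), then converting traces of completely positive maps into traces against POV measures through the duality (\ref{2.1})--(\ref{4}). Throughout, I would abbreviate $p_{j}^{(k)}:=\mathrm{tr}\{\mathcal{M}_{A|\rightarrow1\rightarrow...\rightarrow k}^{(r)}(j,\ldots,j)[\rho_{j}]\}$, so that (\ref{25}) reads $\mathbb{P}_{\mathcal{M}_{A|\rightarrow1\rightarrow...\rightarrow N}^{(r)}}^{success}=\sum_{j}q_{j}\,p_{j}^{(N)}$.

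First, I would iterate (\ref{28}) from level $N$ down to level $1$ and take the trace at each stage. Each step converts $p_{j}^{(k)}$ into $p_{j}^{(k-1)}\,\mathrm{tr}\{\mathcal{M}_{k}^{(r)}(j)[\tau_{out}^{(k-1)}(j,\ldots,j|\rho_{j})]\}$, and the final trace simplifies via (\ref{4}) and (\ref{2.1}) to $\mathrm{tr}\{\tau_{out}^{(k-1)}(j,\ldots,j|\rho_{j})\,\mathrm{M}_{k}^{(r)}(j)\}$. Unrolling this recursion gives the product expansion
\begin{equation*}
p_{j}^{(N)}=\mathrm{tr}\{\rho_{j}\mathrm{M}_{1}^{(r)}(j)\}\prod_{n=2}^{N}\mathrm{tr}\{\tau_{out}^{(n-1)}(j,\ldots,j|\rho_{j})\,\mathrm{M}_{n}^{(r)}(j)\},
\end{equation*}
which, multiplied by $q_{j}$ and summed over $j$, is precisely (\ref{30}).

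For (\ref{31}), the one-receiver success probability appearing in the $n$-th factor expands by definition as
\begin{equation*}
\mathbb{P}_{\mathcal{M}_{n}^{(r)}}^{success}\bigl(\tau_{out}^{(n-1)}(1,\ldots,1|\rho_{1}),\ldots\mid Q_{n-1}^{(1)},\ldots,Q_{n-1}^{(r)}\bigr)=\sum_{j}Q_{n-1}^{(j)}\,\mathrm{tr}\{\tau_{out}^{(n-1)}(j,\ldots,j|\rho_{j})\,\mathrm{M}_{n}^{(r)}(j)\}.
\end{equation*}
Substituting the weights (\ref{32}), read so that $\sum_{j}Q_{n-1}^{(j)}=1$, and using the multiplicative rule $p_{j}^{(n-1)}\,\mathrm{tr}\{\tau_{out}^{(n-1)}(j,\ldots,j|\rho_{j})\,\mathrm{M}_{n}^{(r)}(j)\}=p_{j}^{(n)}$ from the previous paragraph, the sum collapses to
\begin{equation*}
\mathbb{P}_{\mathcal{M}_{n}^{(r)}}^{success}=\frac{\sum_{j}q_{j}\,p_{j}^{(n)}}{\sum_{j}q_{j}\,p_{j}^{(n-1)}}=\frac{\mathbb{P}_{\mathcal{M}_{A|\rightarrow1\rightarrow...\rightarrow n}^{(r)}}^{success}}{\mathbb{P}_{\mathcal{M}_{A|\rightarrow1\rightarrow...\rightarrow(n-1)}^{(r)}}^{success}},\qquad n\geq 2.
\end{equation*}
Taking the product over $n=2,\ldots,N$ and pairing with the $n=1$ factor $\mathbb{P}_{\mathcal{M}_{1}^{(r)}}^{success}=\sum_{j}q_{j}\,p_{j}^{(1)}$ produces a telescoping identity whose value is exactly $\mathbb{P}_{\mathcal{M}_{A|\rightarrow1\rightarrow...\rightarrow N}^{(r)}}^{success}$, proving (\ref{31}).

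The only delicate point is bookkeeping the normalization in (\ref{32}): the product telescopes precisely because the $n$-th numerator cancels the $(n{-}1)$-th denominator, which requires $Q_{n-1}^{(j)}$ to be interpreted as the genuine posterior distribution on the $r$ branches after $n-1$ successful stages (as forced by $\sum_{j}Q_{n-1}^{(j)}=1$). Beyond this, the argument is a direct unrolling of the definitions and the composition law (\ref{20}), and requires no measurement-theoretic input beyond (\ref{28}) and the duality (\ref{2.1})--(\ref{4}) already recorded in Section 2.
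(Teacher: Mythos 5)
Your proposal is correct and follows essentially the same route as the paper, which obtains Proposition 1 by iterating the recursion (\ref{28})--(\ref{29}) and regrouping the resulting product with the normalized posterior weights (\ref{32}); your explicit telescoping of the factors $\mathbb{P}_{\mathcal{M}_{n}^{(r)}}^{success}$ just spells out what the paper leaves implicit. Your reading of the numerator of (\ref{32}) as $q_{j}$ (rather than the evident typo $q_{n-1}$), forced by the stated normalization $\sum_{j}Q_{n-1}^{(j)}=1$, is the intended one, as confirmed by the analogous formula (\ref{50}) in the noisy case.
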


From representations (\ref{30}), (\ref{31}) and the general upper bound
derived in \cite{loubenets} for the success probability in case of only one
receiver%
\begin{equation}
\mathbb{P}_{\mathcal{M}_{A|\rightarrow1}^{(r)}}^{success}(\rho_{1}%
,...,\rho_{r}|q_{1},....,q_{r})\leq\frac{1}{r}\left(  1+%
{\displaystyle\sum\limits_{1\leq i<j\leq r}}
\left\Vert \left(  q_{i}\rho_{i}-q_{j}\rho_{j}\right)  \right\Vert
_{1}\right)  , \label{32.1}%
\end{equation}
it follows that, for any protocol of the $N$-sequential conclusive
discrimination between $r\geq2$ quantum states $\rho_{1},...,\rho_{r},$ the
success probability admits the upper bound%
\begin{equation}
\mathbb{P}_{\mathcal{M}_{A|\rightarrow1\rightarrow...\rightarrow N}^{(r)}%
}^{success}(\rho_{1},...,\rho_{r}\mid q_{1},...,q_{r})\leq\frac{1}{r}\left(
1+%
{\displaystyle\sum\limits_{1\leq i<j\leq r}}
\left\Vert \left(  q_{i}\rho_{i}-q_{j}\rho_{j}\right)  \right\Vert
_{1}\right)  . \label{32.2}%
\end{equation}
Here, notation $\left\Vert \cdot\right\Vert _{1}$ means the trace norm of the
Hermitian operator $\left(  q_{i}\rho_{i}-q_{j}\rho_{j}\right)  .$ Recall
that, for any bounded Hermitian operator,
\begin{align}
X  &  =X^{(+)}-X^{(-)},\text{ \ \ }X^{(+)},X^{(-)}\geq0,\label{32.3}\\
\left\Vert X\right\Vert _{1}  &  =\left\Vert X^{(+)}\right\Vert _{1}%
+\left\Vert X^{(-)}\right\Vert _{1},\nonumber\\
\left\Vert X^{(\pm)}\right\Vert _{1}  &  =\mathrm{tr}\{X^{(\pm)}\}.\nonumber
\end{align}

We stress that, in a general scenario for an $N$-sequential conclusive quantum
state discrimination, which we have introduced above, no any constraints are
imposed on discriminated states $\rho_{1},...,\rho_{r}$ and a prior
probabilities $q_{1},...,q_{r}.$

\section{Optimal protocols}

Let $\mathfrak{M}_{r,N}^{(cond)}$ be a set of quantum state instruments
$\mathcal{M}_{A|\rightarrow1\rightarrow...\rightarrow N}^{(r)}$, describing an
$N$-sequential measurement with outcomes in $\{1,...,r\}^{N}$ under some extra
condition on receivers' quantum measurements which is determined by some
scenario strategy\footnote{This is, for example, the case in \cite{d.fields}
where receivers' quantum measurements are described by specific quantum
instruments.}. The optimal success probability within this strategy is defined
by the maximum%
\begin{equation}
\mathbb{P}_{A|\rightarrow1\rightarrow...\rightarrow N}^{opt.success}\left(
\rho_{1},...,\rho_{r}\mid q_{1},...,q_{r}\right)  \text{ }|_{\mathfrak{M}%
_{r,N}^{(cond)}}=\max_{\mathfrak{M}_{r,N}^{(cond)}}\text{ }\mathbb{P}%
_{\mathcal{M}_{A|\rightarrow1\rightarrow...\rightarrow N}^{(r)}}%
^{success}(\rho_{1},...,\rho_{r}\mid q_{1},...,q_{r}), \label{33}%
\end{equation}
where success probabilities $\mathrm{P}_{\mathcal{M}_{A|\rightarrow
1\rightarrow...\rightarrow N}^{(r)}}^{success}$ are given by either of the
equivalent representations (\ref{25}), (\ref{30}), (\ref{31}). If no any extra
condition on receivers' quantum measurements is put, then the optimal success
probability is given by
\begin{equation}
\mathbb{P}_{A|\rightarrow1\rightarrow...\rightarrow N}^{opt.success}\left(
\rho_{1},...,\rho_{r}\mid q_{1},...,q_{r}\right)  =\max_{\mathfrak{M}_{r,N}%
}\mathbb{P}_{\mathcal{M}_{A|\rightarrow1\rightarrow...\rightarrow N}^{(r)}%
}^{success}(\rho_{1},...,\rho_{r}\mid q_{1},...,q_{r}) \label{33.1}%
\end{equation}
where $\mathfrak{M}_{r,N}=\{\mathfrak{M}_{r,N}^{(cond)}\}$ is the convex set
of all possible state instruments $\mathcal{M}_{A|\rightarrow1\rightarrow
...\rightarrow N}^{(r)}$ describing a consecutive measurement of $N$
receivers. Clearly,
\begin{equation}
\mathbb{P}_{A|\rightarrow1\rightarrow...\rightarrow N}^{opt.success}%
|_{\mathfrak{M}_{r,N}^{(cond)}}\text{ }\leq\text{ }\mathbb{P}_{A|\rightarrow
1\rightarrow...\rightarrow N}^{opt.success}.
\end{equation}

Since in the product of the success probabilities standing in (\ref{31}) the
success probability of each receiver depends on measurement parameters of the
previous receivers, in a general case, the optimal success probability in
(\ref{33.1}) does not need to be equal to the product of the optimal success
probabilities of $N$ receivers.

\subsection{Optimal success probability}

From relations (\ref{31}), (\ref{32.1}) and (\ref{33.1}) it follows that, for
an arbitrary number $r\geq2$ of quantum states states $\rho_{1},...,\rho_{r}$,
pure and mixed and any a priori probabilities $q_{1},...,q_{r}$, the optimal
success probability%
\begin{align}
\mathbb{P}_{A|\rightarrow1\rightarrow...\rightarrow N}^{opt.success}\left(
\rho_{1},...,\rho_{r}\mid q_{1},...,q_{r}\right)   &  \leq\mathbb{P}%
_{A|\rightarrow1}^{opt.success}(\rho_{1},...,\rho_{r}\mid q_{1},...,q_{r}%
)\label{36}\\
&  \leq\frac{1}{r}\left(  1+%
{\displaystyle\sum\limits_{1\leq i<j\leq r}}
\left\Vert q_{i}\rho_{i}-q_{j}\rho_{j}\right\Vert _{1}\right)  .\nonumber
\end{align}
If $N$ receivers discriminate between two states $(r=2)$, then the upper bound
in the second line of (\ref{36}) reduces to the Helstrom upper bound
\cite{c.w.helstrom}:
\begin{align}
\mathbb{P}_{A\rightarrow1\rightarrow...\rightarrow N}^{opt.success}(\rho
_{1},\rho_{2}  &  \mid q_{1},q_{2})\leq\mathbb{P}_{A|\rightarrow
1}^{opt.success}(\rho_{1},\rho_{2}\mid q_{1},q_{2})\label{36.1}\\
&  =\frac{1}{2}\left(  1+\left\Vert q_{1}\rho_{1}-q_{2}\rho_{2}\right\Vert
_{1}\right)  .\nonumber
\end{align}
which is attained on\emph{ any state instrument} $\mathcal{M}_{1}^{(2)}$ (see
(\ref{37})) of the first receiver with\footnote{One and the same POV measure
may correspond to different state instruments, see in Section 2.} the POV
measure
\begin{equation}
\mathrm{M}_{opt}^{(2)}(j)=\mathrm{P}_{0}^{(2)}(j),\text{ \ \ }j=1,2,
\label{36.2}%
\end{equation}
which is projection-valued and defined by:%
\begin{align}
\mathrm{P}_{0}^{(2)}(1)  &  =\sum_{\lambda_{k}\text{ }>0,}\mathrm{E(}%
\lambda_{k}),\mathit{\ \ }\mathrm{P}_{0}^{(2)}(2)=\mathbb{I}_{\mathcal{H}%
}-\mathrm{P}_{0}^{(2)}(1),\label{38}\\
\left(  \mathrm{P}_{0}^{(2)}(j)\right)  ^{2}  &  =\mathrm{P}_{0}%
^{(2)}(j),\text{ }\mathit{\ \ }j=1,2,\text{ \ \ }\mathrm{P}_{0}^{(2)}%
(1)\mathrm{P}_{0}^{(2)}(2)=\mathrm{P}_{0}^{(2)}(2)\mathrm{P}_{0}%
^{(2)}(1)=0.\nonumber
\end{align}
Here, $\mathrm{E(}\lambda_{k})$ are the spectral projections of the Hermitian
operator $q_{1}\rho_{1}-q_{2}\rho_{2}=\sum_{\lambda_{k}}\lambda_{k}%
\mathrm{E(}\lambda_{k}),$ $\sum_{\lambda_{k}}\mathrm{E(}\lambda_{k})=$
$\mathbb{I}_{\mathcal{H}}$, and $\mathrm{P}_{0}^{(2)}(1)$ is the orthogonal
projection on the invariant subspace of operator $(q_{1}\rho_{1}-q_{2}\rho
_{2}),$ corresponding to its positive eigenvalues.

\begin{remark}
We stress that, in case of discrimination between $r>2$ arbitrary quantum
states $\rho_{1},...,\rho_{r}$, the precise expression for the optimal
POV\ measure is not known even for only one receiver, and, after the optimal
measurement of the first receiver, the conditional posterior states
corresponding to different outcomes do not need to be mutually orthogonal.
\end{remark}

However, the following statement introduces the special condition on $r\geq2$
quantum states where under, the $N$-sequential conclusive quantum state
discrimination scenario specified in Section 3, the upper bound in the first
line of (\ref{36}) is attained.

\begin{theorem}
Let $\rho_{1},...,\rho_{r},$ $r\geq2,$ be arbitrary quantum states, pure or
mixed, on a Hilbert space $\mathcal{H}$ of dimension $d\geq r$, given with a
priori probabilities $q_{1},...,q_{r},$ and $\mathrm{M}_{opt}^{(r)}$ be the
optimal POV measure of the first receiver measurement for the discrimination
between these states:%
\begin{align}
\mathbb{P}_{A|\rightarrow1}^{opt.success}(\rho_{1},...,\rho_{r}  &  \mid
q_{1},...,q_{r})=\max_{\mathcal{M}_{1}^{(2)}}\mathbb{P}_{\mathcal{M}_{1}%
^{(2)}}^{success}(\rho_{1},...,\rho_{r}\mid q_{1},...,q_{r})\label{37}\\
&  =\sum_{j=1,...,r}q_{j}\mathrm{tr}\left\{  \rho_{j}\mathrm{M}_{opt}%
^{(r)}(j)\right\}  .\nonumber
\end{align}
If the optimal measurement of the first receiver on states $\rho_{1}%
,...,\rho_{r}$ can be realized via a quantum state instrument\footnote{See
representation (\ref{7}).}
\begin{equation}
\mathcal{M}_{1}^{(r)}=\sum_{j=1,...,r}\mathrm{K}_{r}(j)[\cdot]\mathrm{K}%
_{r}^{\dagger}(j),\text{ \ \ \ \ \ \ \ \ }\mathrm{K}_{r}^{\dagger
}(j)\mathrm{K}_{r}(j)=\mathrm{M}_{opt}^{(r)}(j), \label{38_}%
\end{equation}
where the Kraus operators $\mathrm{K}_{r},$ $j=1,...,r,$ satisfy the relation%
\begin{equation}
\mathrm{K}_{r}^{\dag}(j)\mathrm{P}^{(r)}(j)\mathrm{K}_{r}(j)=\mathrm{M}%
_{opt}^{(r)}(j),\text{ \ \ \ \ }j=1,...,r, \label{44.1}%
\end{equation}
for some mutually orthogonal projections $\mathrm{P}^{(r)}(j)$ on
$\mathcal{H}:$
\begin{equation}
\sum_{j=1,...,r}\mathrm{P}^{(r)}(j)=\mathbb{I}_{\mathcal{H}},\text{
\ \ \ }\mathrm{P}^{(r)}(j_{n_{1}})\mathrm{P}^{(r)}(j_{n_{2}})=\delta
_{n_{1}n_{2}}\mathrm{P}^{(r)}(j_{n_{1}}), \label{44_1}%
\end{equation}
then, under a conclusive state discrimination with any number $N\geq2$ of the
sequential receivers, the optimal success probability (\ref{33.1}) equals to
the optimal success probability (\ref{37}) of the first receiver:%
\begin{equation}
\mathbb{P}_{A|\rightarrow1\rightarrow...\rightarrow N}^{opt.success}(\rho
_{1},...,\rho_{r}|q_{1},...,q_{r})=\mathbb{P}_{A|\rightarrow1}^{opt.success}%
(\rho_{1},...,\rho_{r}|q_{1},...,q_{r}),
\end{equation}
and is attained under the protocol, described by the quantum state instrument%
\begin{align}
&  \mathfrak{J}_{A|\rightarrow1\rightarrow...\rightarrow N}^{(r)}\left(
j_{1},...,j_{N}\right)  [\cdot]\label{44.2}\\
&  =\mathrm{P}^{(r)}(j_{N})\cdot\ldots\cdot\mathrm{P}^{(r)}(j_{2})\text{
}\mathrm{K}_{r}(j_{1})[\cdot]\text{ }\mathrm{K}_{r}^{\dag}(j_{1})\text{
}\mathrm{P}^{(r)}(j_{2})\cdot\ldots\cdot\mathrm{P}^{(r)}(j_{N}),\nonumber\\
j_{n}  &  =1,...,r.\nonumber
\end{align}

\end{theorem}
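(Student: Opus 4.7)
My plan is to combine the already-established upper bound $\mathbb{P}_{A|\rightarrow 1\rightarrow\cdots\rightarrow N}^{opt.success}\leq\mathbb{P}_{A|\rightarrow 1}^{opt.success}$ from (\ref{36}) with an explicit verification that the protocol (\ref{44.2}) saturates this bound. All the work thus goes into (i) checking that (\ref{44.2}) is a bona fide state instrument in the sense of (\ref{1.1}), and (ii) computing the success probability it yields.

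The single algebraic lemma driving everything is the following. Combining the normalization $\mathrm{K}_r^\dagger(j)\mathrm{K}_r(j)=\mathrm{M}_{opt}^{(r)}(j)$ from (\ref{38_}) with the hypothesis (\ref{44.1}), one has $\mathrm{K}_r^\dagger(j)\bigl(\mathbb{I}_{\mathcal{H}}-\mathrm{P}^{(r)}(j)\bigr)\mathrm{K}_r(j)=0$. Since $\mathbb{I}_{\mathcal{H}}-\mathrm{P}^{(r)}(j)\geq 0$, taking its square root yields $(\mathbb{I}_{\mathcal{H}}-\mathrm{P}^{(r)}(j))^{1/2}\mathrm{K}_r(j)=0$, and therefore
\[
\mathrm{P}^{(r)}(j)\mathrm{K}_r(j)=\mathrm{K}_r(j),\qquad \mathrm{K}_r^\dagger(j)\mathrm{P}^{(r)}(j)=\mathrm{K}_r^\dagger(j).
\]
In words: the range of each first-receiver Kraus operator $\mathrm{K}_r(j)$ lies in the subspace $\mathrm{P}^{(r)}(j)\mathcal{H}$. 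This is exactly the mechanism by which each subsequent receiver, by performing the Lüders projective measurement associated to the mutually orthogonal $\{\mathrm{P}^{(r)}(j)\}$, can reproduce the first receiver's outcome with certainty.

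Structurally, (\ref{44.2}) factors as the composition $\mathcal{M}_N\circ\cdots\circ\mathcal{M}_2\circ\mathcal{M}_1$ with $\mathcal{M}_1(j)[\cdot]=\mathrm{K}_r(j)[\cdot]\mathrm{K}_r^\dagger(j)$ and $\mathcal{M}_n(j)[\cdot]=\mathrm{P}^{(r)}(j)[\cdot]\mathrm{P}^{(r)}(j)$ for $n\geq 2$, each of which is a completely positive state instrument; the normalization (\ref{1.1}) then follows by summing $j_N,j_{N-1},\ldots,j_1$ in succession, using the cyclicity of the trace, $(\mathrm{P}^{(r)}(j_n))^2=\mathrm{P}^{(r)}(j_n)$, $\sum_{j_n}\mathrm{P}^{(r)}(j_n)=\mathbb{I}_{\mathcal{H}}$, and finally $\sum_{j_1}\mathrm{K}_r^\dagger(j_1)\mathrm{K}_r(j_1)=\sum_{j_1}\mathrm{M}_{opt}^{(r)}(j_1)=\mathbb{I}_{\mathcal{H}}$. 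To compute the success probability I specialize (\ref{44.2}) to the diagonal outcome $j_1=\cdots=j_N=j$ acting on $\rho_j$: the absorption identities $\mathrm{P}^{(r)}(j)\mathrm{K}_r(j)=\mathrm{K}_r(j)$ and $\mathrm{K}_r^\dagger(j)\mathrm{P}^{(r)}(j)=\mathrm{K}_r^\dagger(j)$ collapse every inner $\mathrm{P}^{(r)}(j)$, leaving $\mathfrak{J}_{A|\rightarrow 1\rightarrow\cdots\rightarrow N}^{(r)}(j,\ldots,j)[\rho_j]=\mathrm{K}_r(j)\rho_j\mathrm{K}_r^\dagger(j)$, whose trace equals $\mathrm{tr}\{\rho_j\mathrm{M}_{opt}^{(r)}(j)\}$. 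Summing against $q_j$ and invoking (\ref{37}) gives precisely $\mathbb{P}_{A|\rightarrow 1}^{opt.success}(\rho_1,\ldots,\rho_r\mid q_1,\ldots,q_r)$, which meets the upper bound in (\ref{36}) and proves the theorem.

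The only nontrivial step is the absorption lemma $\mathrm{P}^{(r)}(j)\mathrm{K}_r(j)=\mathrm{K}_r(j)$; once that is in hand the rest is bookkeeping. I would also remark that the hypothesis is not vacuous: in the case $r=2$ the optimal POV measure $\mathrm{M}_{opt}^{(2)}$ is itself projection-valued, as in (\ref{38}), so one may simply take $\mathrm{P}^{(2)}(j)=\mathrm{M}_{opt}^{(2)}(j)$ and $\mathrm{K}_2(j)=\mathrm{P}^{(2)}(j)$, which automatically satisfies (\ref{44.1}); this is precisely what allows the Helstrom bound (\ref{36.1}) to be attained for every $N$.
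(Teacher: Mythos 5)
Your proposal is correct and follows essentially the same route as the paper: evaluate the diagonal elements of the instrument (\ref{44.2}), use condition (\ref{44.1}) to reduce the resulting trace to $\sum_j q_j\mathrm{tr}\{\rho_j\mathrm{M}_{opt}^{(r)}(j)\}$, and invoke the upper bound in the first line of (\ref{36}) to conclude optimality. Your absorption identity $\mathrm{P}^{(r)}(j)\mathrm{K}_r(j)=\mathrm{K}_r(j)$ is a valid and illuminating extra step (the paper instead collapses the projections by idempotency and cyclicity of the trace under (\ref{44.1}) alone), and your explicit check of the normalization (\ref{1.1}) is a detail the paper leaves implicit.
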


\begin{proof}
For the state instrument (\ref{44.2}), we have%
\begin{equation}
\mathfrak{J}_{A|\rightarrow1\rightarrow...\rightarrow N}^{(r)}\left(
j,...,j\right)  [\cdot]=\mathrm{P}^{(r)}(j)\text{ }\mathrm{K}_{r}%
(j)[\cdot]\mathrm{K}_{r}^{\dag}(j)\text{ }\mathrm{P}^{(r)}(j),\text{
\ \ }\mathit{\ }j=1,...,r.
\end{equation}
In view of (\ref{2.1}), (\ref{25}), (\ref{37}), (\ref{38_}) and (\ref{44.1}),
this implies that, under the discrimination protocol described by the state
instrument (\ref{44.2}), we have the following relations:%
\begin{align}
\mathbb{P}_{\mathfrak{J}_{A|\rightarrow1\rightarrow...\rightarrow N}^{(2)}%
}^{success}(\rho_{1},...,\rho_{r}|q_{1},...,q_{r})  &  =\sum_{j=1,...,r}%
q_{j}\mathrm{tr}\left\{  \mathfrak{J}_{A|\rightarrow1\rightarrow...\rightarrow
N}^{(r)}\left(  j,...,j\right)  [\rho_{j}]\right\} \\
&  =\sum_{j=1,...,r}q_{j}\mathrm{tr}\left\{  \rho_{j}\mathrm{K}_{r}^{\dagger
}(j)\text{ }\mathrm{P}^{(r)}(j)\text{\textrm{ }}\mathrm{K}_{r}(j)\right\}
\nonumber\\
&  =\sum_{j=1,...,r}q_{j}\mathrm{tr}\left\{  \rho_{j}\mathrm{M}_{opt}%
^{(\rho_{1},...,\rho_{r})}(j)\right\} \nonumber\\
&  =\mathbb{P}_{A|\rightarrow1}^{opt.success}(\rho_{1},...,\rho_{r}%
|q_{1},...,q_{r}).\nonumber
\end{align}
This proves the statement.
\end{proof}

\begin{remark}
Within the optimal protocol (\ref{44.2}), after the optimal measurement
(\ref{38_}) of the first receiver, the conditional posterior states
corresponding to different outcomes do not need to be mutually orthogonal,
however, due to condition (\ref{44.1}) and updated in view of (\ref{32}) a
priori probabilities for these posterior states, the success probability of
the second receiver is equal to $1.$
\end{remark}

For $r=2$, the relations (\ref{38_}) and (\ref{44.1}) are true for any two
quantum states and the quantum state instrument of the first receiver, given
by:%
\begin{equation}
\mathcal{M}_{1}^{(2)}[\cdot]=\sum_{j=1,2}\mathrm{P}_{0}^{(2)}(j)[\cdot
]\mathrm{P}_{0}^{(2)}(j) \label{49_1}%
\end{equation}
where the Kraus operators $\mathrm{P}_{0}^{(2)}(j),$ $j=1,2,$ are given in
(\ref{38}) and, in (\ref{44.1}), the set of the orthogonal projections is
$\left\{  \mathrm{P}_{0}^{(2)}(1),\mathrm{P}_{0}^{(2)}(2)\right\}  .$ If two
quantum states $\rho_{1},\rho_{2}$ are such that, in (\ref{38}), each
projection $\mathrm{P}_{0}^{(2)}(j)\neq0$, $j=1,2,$ then, for these two
quantum states, relations (\ref{38_}) and (\ref{44.1}) are also true under the
quantum state instrument of the form
\begin{equation}
\widetilde{\mathcal{M}}_{1}^{(2)}[\cdot]=\sum_{j=1,2}\widetilde{\mathrm{K}%
}_{2}(j)[\cdot]\widetilde{\mathrm{K}}_{2}^{\dagger}(j), \label{49-1}%
\end{equation}
where the Kraus operators $\widetilde{\mathrm{K}}_{2}(j)$ are defined via the
relations
\begin{align}
\widetilde{\mathrm{K}}_{2}(j)  &  =\sum_{i=1,...,k(j)}|\phi_{i}(j)\rangle
\langle\mathrm{v}_{i}(j)|,\text{ \ \ \ }\sum_{i=1,...,k(j)}|\mathrm{v}%
_{i}(j)\rangle\langle\mathrm{v}_{i}(j)|\text{ }=\mathrm{P}_{0}^{(2)}(j),\text{
\ }\ j=1,2,\label{49_2}\\
\left\langle \text{ }\mathrm{v}_{i_{1}}(j_{1})\text{ }|\text{ }\mathrm{v}%
_{i_{2}}(j_{2})\right\rangle  &  =\left\langle \text{ }\phi_{i_{1}}%
(j_{1})\text{ }|\text{ }\phi_{i_{2}}(j_{2})\right\rangle =\delta_{i_{1}i_{2}%
}\delta_{j_{1}j_{2}},\nonumber
\end{align}
while the orthogonal projections in (\ref{44.1}) are given by
\begin{equation}
\widetilde{\mathrm{P}}^{(2)}(j)=\sum_{i=1,...,k(j)}|\phi_{i}(j)\rangle
\langle\phi_{i}(j)|,\text{ \ }\ j=1,2. \label{49_3}%
\end{equation}
Here, (i) $\{|\mathrm{v}_{i}(j)\rangle,i=1,...,k(j)\},$ $j=1.2,$ are any two
orthonormal bases spanning the invariant subspaces of of the Hermitian
operator $(q_{1}\rho_{1}-q_{2}\rho_{2}),$ corresponding to its positive and
nonpositive eigenvalues, respectively, and having dimensions $k(j),$ $j=1,2,$
$\sum\limits_{j=1,2}k(j)=d$, and (ii) $\left\{  \text{ }|\phi_{i}%
(j)\rangle,\text{ }i=1,...,k(j),\ \ j=1,2\right\}  $ is any orthonormal basis
of a Hilbert space $\mathcal{H}$. If\ the bases
\begin{equation}
\{|\phi_{i}(j)\rangle,\text{ }i=1,...,k(j),\text{ }j=1,2\}=\{|\mathrm{v}%
_{i}(j)\rangle,\text{ }i=1,...,k(j),\text{ }j=1,2\},
\end{equation}
then $\widetilde{\mathrm{K}}_{2}(j)=\widetilde{\mathrm{P}}^{(2)}%
(j)=\mathrm{P}_{0}^{(2)}(j)$ and the quantum state instrument (\ref{49-1})
reduces to the state instrument (\ref{49_1}).

Therefore, for an $N$-sequential conclusive discrimination between two quantum
states $\rho_{1},\rho_{2}$ the statement of Theorem 1 implies.

\begin{theorem}
Let $\rho_{1},\rho_{2},$ given with a priori probabilities $q_{1},q_{2},$ be
two arbitrary quantum states, pure or mixed, on a Hilbert space $\mathcal{H}$
of an arbitrary dimension $d\geq2$. For an $N$-sequential conclusive
discrimination between two states $\rho_{1},\rho_{2}$, the optimal success
probability (\ref{33.1}) is equal to the Helstrom bound
\begin{equation}
\mathbb{P}_{A|\rightarrow1\rightarrow...\rightarrow N}^{opt.success}\left(
\rho_{1},\rho_{2}|q_{1},q_{2}\right)  =\frac{1}{2}\left(  1+\left\Vert
q_{1}\rho_{1}-q_{2}\rho_{2}\right\Vert _{1}\right)  , \label{40}%
\end{equation}
for every number $N\geq1$ of sequential receivers and is attained under the
optimal protocol described by the quantum state instrument
\begin{align}
\mathcal{M}_{A|\rightarrow1\rightarrow...\rightarrow N}^{(2)}\left(
j_{1},...,j_{N}\right)  [\cdot]  &  :=\mathrm{P}_{0}^{(2)}(j_{N})\cdot
\ldots\cdot\mathrm{P}_{0}^{(2)}(j_{1})[\cdot]\mathrm{P}_{0}^{(2)}(j_{1}%
)\cdot\ldots\cdot\mathrm{P}_{0}^{(2)}(j_{N}),\label{39}\\
j_{n}  &  =1,2.\nonumber
\end{align}
If quantum states $\rho_{1},\rho_{2}$ are such that $\mathrm{P}_{0}%
^{(2)}(j)\neq0$, $j=1,2,$ then, for these states, the optimal success
probability (\ref{40}) is also attained under either of the optimal protocols
described by the quantum state instrument of the form:%
\begin{align}
&  \widetilde{\mathcal{M}}_{A|\rightarrow1\rightarrow...\rightarrow N}%
^{(2)}\left(  j_{1},...,j_{N}\right)  [\cdot]\label{39_1}\\
&  =\widetilde{\mathrm{P}}^{(2)}(j_{N})\cdot\ldots\cdot\widetilde{\mathrm{P}%
}^{(2)}(j_{2})\text{ }\widetilde{\mathrm{K}}_{2}(j_{1})[\cdot]\text{
}\widetilde{\mathrm{K}}_{2}^{\dag}(j_{1})\text{ }\widetilde{\mathrm{P}}%
^{(2)}(j_{2})\cdot\ldots\cdot\widetilde{\mathrm{P}}^{(2)}(j_{N}),\nonumber\\
\widetilde{\mathrm{P}}^{(2)}(j_{n})\mathbf{\ }  &  \neq\mathrm{P}_{0}%
^{(2)}(j_{n}),\text{ \ \ \ }j_{n}=1,2.\nonumber
\end{align}
Here, (i) $\mathrm{P}_{0}^{(2)}(j),$\textbf{ }$j=1,2,$ are orthogonal
projections in (\ref{38}); (ii) the Kraus operators $\widetilde{\mathrm{K}%
}_{2}(j),$ $j=1,2$ are defined by (\ref{49_2}); (iii) the orthogonal
projections $\widetilde{\mathrm{P}}^{(2)}(j)$ are given in (\ref{49_3}).
\end{theorem}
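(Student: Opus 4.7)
The strategy is to reduce Theorem 2 to Theorem 1 by verifying that the hypotheses (\ref{38_}) and (\ref{44.1}) of Theorem 1 are always satisfiable when $r=2$, and that the two explicit state instruments $\mathcal{M}_{1}^{(2)}$ in (\ref{49_1}) and $\widetilde{\mathcal{M}}_{1}^{(2)}$ in (\ref{49-1}) both provide valid first-receiver measurements in the sense of those hypotheses. Once this is done, Theorem 1 directly yields both the equality of the $N$-sequential optimal success probability with the first-receiver optimum, the Helstrom value of that optimum is recalled in (\ref{36.1}), and the explicit protocols (\ref{39}), (\ref{39_1}) are precisely the $N$-sequential instruments constructed via (\ref{44.2}) for the two choices.

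First I would treat the protocol (\ref{39}). Here the Kraus operators are $\mathrm{K}_{2}(j)=\mathrm{P}_{0}^{(2)}(j)$. Because $\mathrm{P}_{0}^{(2)}(j)$ is an orthogonal projection, $\mathrm{K}_{2}^{\dagger}(j)\mathrm{K}_{2}(j)=\mathrm{P}_{0}^{(2)}(j)=\mathrm{M}_{opt}^{(2)}(j)$ by (\ref{36.2}), giving (\ref{38_}). Taking $\mathrm{P}^{(2)}(j):=\mathrm{P}_{0}^{(2)}(j)$ as the mutually orthogonal projections required in (\ref{44_1}) (orthogonality and the resolution of identity are guaranteed by (\ref{38})), one gets $\mathrm{K}_{2}^{\dagger}(j)\mathrm{P}^{(2)}(j)\mathrm{K}_{2}(j)=\mathrm{P}_{0}^{(2)}(j)$, so (\ref{44.1}) holds. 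Substituting $\mathrm{K}_{2}(j)=\mathrm{P}^{(2)}(j)=\mathrm{P}_{0}^{(2)}(j)$ into the template (\ref{44.2}) of Theorem 1 produces exactly (\ref{39}).

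For the alternative protocol (\ref{39_1}), which requires $\mathrm{P}_{0}^{(2)}(j)\neq 0$ for $j=1,2$, I would verify the two algebraic identities using only the orthonormality relations in (\ref{49_2}): a direct computation with $\widetilde{\mathrm{K}}_{2}(j)=\sum_{i}|\phi_{i}(j)\rangle\langle\mathrm{v}_{i}(j)|$ gives $\widetilde{\mathrm{K}}_{2}^{\dagger}(j)\widetilde{\mathrm{K}}_{2}(j)=\sum_{i}|\mathrm{v}_{i}(j)\rangle\langle\mathrm{v}_{i}(j)|=\mathrm{P}_{0}^{(2)}(j)=\mathrm{M}_{opt}^{(2)}(j)$, which is (\ref{38_}); and since the $\{|\phi_{i}(j)\rangle\}$ are orthonormal, $\widetilde{\mathrm{P}}^{(2)}(j)$ in (\ref{49_3}) are mutually orthogonal projections that sum to $\mathbb{I}_{\mathcal{H}}$, and $\widetilde{\mathrm{K}}_{2}^{\dagger}(j)\widetilde{\mathrm{P}}^{(2)}(j)\widetilde{\mathrm{K}}_{2}(j)=\mathrm{P}_{0}^{(2)}(j)$ by a brief inner-product calculation, which is (\ref{44.1}). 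Plugging these into (\ref{44.2}) gives (\ref{39_1}).

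There is no substantial analytic obstacle: the only delicate point is bookkeeping with the two orthonormal bases in (\ref{49_2}) to make sure that the Kraus operators of the first receiver change basis from $\{|\mathrm{v}_{i}(j)\rangle\}$ (the eigenbasis of $q_{1}\rho_{1}-q_{2}\rho_{2}$) to the free basis $\{|\phi_{i}(j)\rangle\}$ while still being compatible with the projectors $\widetilde{\mathrm{P}}^{(2)}(j)$ that act on the post-measurement side. The case where one of the two projections $\mathrm{P}_{0}^{(2)}(j)$ vanishes is excluded from (\ref{39_1}) precisely because the bases in (\ref{49_2}) would degenerate, while the simpler protocol (\ref{39}) covers this degenerate case without modification. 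Finally, the Helstrom value (\ref{40}) follows by substituting the attained success probability into (\ref{36.1}).
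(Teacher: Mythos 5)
Your proposal is correct and follows essentially the same route as the paper: the paper likewise establishes Theorem 2 by checking, in the passage containing (\ref{49_1})--(\ref{49_3}), that the hypotheses (\ref{38_}) and (\ref{44.1}) of Theorem 1 hold for the two first-receiver instruments with the projections $\{\mathrm{P}_{0}^{(2)}(j)\}$ (respectively $\{\widetilde{\mathrm{P}}^{(2)}(j)\}$), and then invokes Theorem 1 together with the Helstrom value (\ref{36.1}) of the first receiver's optimum. Your explicit verifications $\widetilde{\mathrm{K}}_{2}^{\dagger}(j)\widetilde{\mathrm{K}}_{2}(j)=\mathrm{P}_{0}^{(2)}(j)$ and $\widetilde{\mathrm{K}}_{2}^{\dagger}(j)\widetilde{\mathrm{P}}^{(2)}(j)\widetilde{\mathrm{K}}_{2}(j)=\mathrm{P}_{0}^{(2)}(j)$ are exactly the computations the paper leaves implicit.
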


Under each of the optimal protocols described by the state instruments
(\ref{39}) and (\ref{39_1}), the measurement of the first receiver for the
discrimination between any two quantum states $\rho_{1},\rho_{2}$ is the
optimal one described by projections in (\ref{38}) while if, in (\ref{38}),
the orthogonal projections $\mathrm{P}_{0}^{(2)}(j)\neq0$, $j=1,2,$ then,
after the measurement of the first receiver, the conditional posterior states,
corresponding to different outcomes $j=1,2$, are mutually orthogonal (but not,
in general, pure), otherwise, are given by $\rho_{1}$ (or $\rho_{2}$).

\subsection{Implementation via indirect measurements}

Let, under an $N$-sequential conclusive discrimination between two arbitrary
quantum states, pure or mixed, all receivers perform indirect measurements.

\begin{proposition}
The optimal $N$-sequential conclusive state discrimination protocol specified
in Theorem 2 is implemented if each $n$-th receiver performs the indirect
measurement described by the statistical realization\footnote{See in Section
2.}
\begin{equation}
\Xi_{n}=\left\{  \mathbb{C}^{2},|b_{n}\rangle\langle b_{n}%
|,\widetilde{\mathrm{P}}_{n},U_{n}\right\}  , \label{45}%
\end{equation}
where $|b_{n}\rangle\langle b_{n}|$ is a pure state on $\mathbb{C}^{2}$, the
projection-valued measure $\widetilde{\mathrm{P}}_{n}$ on $\mathbb{C}^{2}$ has
the elements $\widetilde{\mathrm{P}}_{n}(1)=|b_{n}\rangle\langle b_{n}|$ and
$\widetilde{\mathrm{P}}_{n}(2)=|b_{n}^{\bot}\rangle\langle b_{n}^{\bot
}|=\mathbb{I}_{\bar{\mathbb{C}^{2}}}-|b_{n}\rangle\langle b_{n}|$, and $U_{n}$
is the unitary operator on $\mathcal{H}\otimes\mathbb{C}^{2}$ having the
CNOT-like form:
\begin{equation}
U_{n}=\mathrm{P}_{0}(1)\otimes\mathbb{I}_{\mathbb{C}^{2}}+\mathrm{P}%
_{0}(2)\otimes\left(  \text{ }|b_{n}^{\bot}\rangle\langle b_{n}|\text{
}+\text{ }|b_{n}\rangle\langle b_{n}^{\bot}|\right)  . \label{46}%
\end{equation}
Here, $\mathrm{P}_{0}(j),$ $j=1,2$ are projections (\ref{38}) on a Hilbert
space $\mathcal{H}$.
\end{proposition}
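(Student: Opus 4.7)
The plan is to read off the Kraus representation of each receiver's state instrument directly from the statistical realization $\Xi_{n}$ via the correspondence (\ref{16})--(\ref{17}) and to verify that these Kraus operators coincide with the projections $\mathrm{P}_{0}^{(2)}(j)$ appearing in the optimal state instrument (\ref{39}) of Theorem 2. Since the ancilla state $\sigma_{n}=|b_{n}\rangle\langle b_{n}|$ is pure and the projection-valued measure $\widetilde{\mathrm{P}}_{n}$ has rank-one values forming an orthonormal basis $\{|b_{n}\rangle,|b_{n}^{\bot}\rangle\}$ of $\mathbb{C}^{2}$, the hypothesis preceding relation (\ref{16}) is satisfied, so the Kraus operators of the indirect measurement described by $\Xi_{n}$ are labeled only by outcomes $j\in\{1,2\}$.

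First I would verify that $U_{n}$ in (\ref{46}) is genuinely unitary; this reduces to the unitarity of the anti-diagonal operator $X_{n}:=|b_{n}^{\bot}\rangle\langle b_{n}|+|b_{n}\rangle\langle b_{n}^{\bot}|$ on $\mathbb{C}^{2}$ combined with the mutual orthogonality and completeness $\mathrm{P}_{0}(1)\mathrm{P}_{0}(2)=0,\ \mathrm{P}_{0}(1)+\mathrm{P}_{0}(2)=\mathbb{I}_{\mathcal{H}}$ of the spectral projections from (\ref{38}). This is a routine but necessary preliminary step.

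The heart of the argument is a short application of the CNOT-like structure of $U_{n}$ to a vector $|\psi\rangle\otimes|b_{n}\rangle$ with arbitrary $|\psi\rangle\in\mathcal{H}$. A direct computation gives
\begin{equation*}
U_{n}(|\psi\rangle\otimes|b_{n}\rangle)=\mathrm{P}_{0}(1)|\psi\rangle\otimes|b_{n}\rangle+\mathrm{P}_{0}(2)|\psi\rangle\otimes|b_{n}^{\bot}\rangle,
\end{equation*}
and identifying the right-hand side with the canonical decomposition (\ref{16}) under $|\xi_{1}\rangle=|b_{n}\rangle$, $|\xi_{2}\rangle=|b_{n}^{\bot}\rangle$ yields $K_{n}(j)=\mathrm{P}_{0}^{(2)}(j)$ for $j=1,2$. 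Substituted into (\ref{7}), this produces the single-receiver state instrument
\begin{equation*}
\mathcal{M}_{n}^{(2)}(j)[\cdot]=\mathrm{P}_{0}^{(2)}(j)[\cdot]\mathrm{P}_{0}^{(2)}(j),\qquad j=1,2,
\end{equation*}
independently of $n$ (the choice of $|b_{n}\rangle$ only selects the ancilla sector and drops out upon tracing).

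Finally, plugging these identical single-receiver state instruments into the consecutive-measurement composition rule (\ref{20}) reproduces exactly the optimal state instrument (\ref{39}) from Theorem 2, whence the indirect-measurement protocol attains the Helstrom bound (\ref{40}). I expect no serious obstacle: the unitary $U_{n}$ has been designed precisely so that the ancilla flips if and only if the system projects into the range of $\mathrm{P}_{0}(2)$, which is exactly the dilation that produces the projective Kraus structure required by Theorem 2. The only delicate bookkeeping point is to confirm that the orthogonality $\mathrm{P}_{0}(1)\mathrm{P}_{0}(2)=0$ is used twice---once to guarantee unitarity of $U_{n}$, and once implicitly when identifying $K_{n}(j)^{\dagger}K_{n}(j)=\mathrm{P}_{0}^{(2)}(j)=\mathrm{M}_{\mathrm{opt}}^{(2)}(j)$ via (\ref{9}).
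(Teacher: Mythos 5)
Your proposal is correct and follows essentially the same route as the paper: both reduce the claim to showing that the indirect measurement $\Xi_{n}$ induces the single-receiver state instrument $\mathcal{M}_{n}^{(2)}(j)[\cdot]=\mathrm{P}_{0}^{(2)}(j)[\cdot]\mathrm{P}_{0}^{(2)}(j)$ and then invoke Theorem 2 via the composition rule (\ref{20}). The only cosmetic difference is that you extract the Kraus operators from $U_{n}(|\psi\rangle\otimes|b_{n}\rangle)$ via (\ref{16})--(\ref{17}), whereas the paper computes the partial trace in (\ref{15}) directly; your explicit check of the unitarity of $U_{n}$ is a small bonus the paper leaves implicit.
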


\begin{proof}
Let the indirect measurement of each $n$-th receiver be described by the
statistical realization (\ref{45}). Then by (\ref{15}), the state instrument
$\mathcal{M}_{n}^{(2)}(\cdot)$ describing the $n$-th receiver's indirect
measurement (\ref{45}) is given by
\begin{align}
\mathcal{M}_{n}^{(2)}(j)[\sigma]  &  =\mathrm{tr}_{\bar{\mathcal{H}}}\left\{
\mathit{\ }\mathbb{I}_{\mathcal{H}}\otimes\widetilde{\mathrm{P}}_{n}(j)\left(
\text{ }U(\sigma\otimes|b_{n}\rangle\langle b_{n}|)U^{\dagger}\right)
\mathbb{I}_{\mathcal{H}}\otimes\widetilde{\mathrm{P}}_{n}(j)\right\}
\label{47}\\
&  =\mathrm{P}_{0}(j)\text{ }[\sigma]\text{ }\mathrm{P}_{0}(j),\mathit{\ \ \ }%
j=1,2,\mathit{\ \ \ }\forall\sigma,\nonumber
\end{align}
and constitutes the $N$-sequential conclusive discrimination protocol
described in (\ref{39}). According to Theorem 2, this proves the statement of
Proposition 3.
\end{proof}

\section{$N$-sequential conclusive discrimination under a noise}

In Section 3, we have introduced the general framework for the description of
an $N$-sequential conclusive state discrimination in case of an ideal
multipartite quantum communication between participants. However, in a
realistic situation, a multipartite quantum communication between participants
is noisy and, in this Section, we proceed to specify a general framework for
the description of an $N$-sequential conclusive state discrimination via noisy
quantum communication channels.

Let $\Lambda_{1}$ be a noisy quantum channel between a sender and a first
receiver and $\Lambda_{n}$ -- between an $(n-1)$-th receiver and an $n$-th
receiver. In this case, similarly to (\ref{20}), a quantum instrument
describing a consecutive measurement with an outcome $(j_{1},...,j_{k}%
)\in\{1,...,r\}^{k}$ performed by $N$ participants communicated via noisy
quantum channels takes the form
\begin{align}
\overset{\leadsto}{\mathcal{M}}_{A|\leadsto1\leadsto...\leadsto k}%
^{(r)}\left(  j_{1},...,j_{k}\right)  [\cdot]  &  :=\mathcal{M}_{k}%
^{(r)}\mathcal{(}j)\left[  \Lambda_{k}\left[  \mathcal{M}_{k-1}^{(r)}%
\mathcal{(}j_{k})\left[  \cdots\mathcal{M}_{1}^{(r)}\mathcal{(}j_{1})\left[
\Lambda_{1}[\cdot]\right]  \cdots\right]  \right]  \right]  ,\label{47.1}\\
&  \left(  j_{1},...,j_{k}\right)  \in\{1,...,r\}^{k},\nonumber
\end{align}
Here, symbol $\leadsto$ means a quantum state discrimination under a noisy
communication between participants.\textbf{ }

From (\ref{47.1}) it follows that, under a noisy multipartite communication,
representation (\ref{25}) for the success probability is to be replaced by%

\begin{align}
&  \mathbb{P}_{\overset{\leadsto}{\mathcal{M}}_{A|\rightsquigarrow
1\leadsto..\leadsto N}^{(r)}}^{success}\left(  \text{ }\rho_{1},...,\rho
_{r}|q_{1},...,q_{r}\right)  |_{\Lambda_{1},...,\Lambda_{_{N}}}\label{48}\\
&  =\sum_{j=1,...,r}q_{j}\mathrm{tr}\left\{  \mathcal{M}_{N}^{(r)}%
\mathcal{(}j)\left[  \Lambda_{N}\left[  \mathcal{M}_{N-1}^{(r)}\mathcal{(}%
j)\left[  \cdots\mathcal{M}_{1}^{(r)}\mathcal{(}j)[\Lambda_{1}[\rho_{j}%
]\cdots\right]  \right]  \right]  \right\} \nonumber\\
&  =\sum_{j=1,...,r}q_{j}\mu_{\overset{\leadsto}{\mathcal{M}}_{A|\leadsto
1\leadsto...\leadsto N}^{(r)}}(\text{ }\overbrace{j,...,j}^{N}\mid\rho
_{j})\nonumber
\end{align}
where
\begin{equation}
\mu_{\overset{\leadsto}{\mathcal{M}}_{A|\leadsto1\leadsto...\leadsto k}^{(r)}%
}(j_{1},...,j_{k}|\rho_{j})=\mathrm{tr}\left\{  \overset{\leadsto
}{\mathcal{M}}_{A|\leadsto1\leadsto...\leadsto k}^{(r)}(j_{1},...,j_{k}%
)[\rho_{j}]\right\}  . \label{48.1}%
\end{equation}

The posterior state $\overset{\leadsto}{\sigma}_{out}^{(k)}%
(\overbrace{j,...,j}^{k})|\rho_{j})$ conditioned by an outcome
$(\overbrace{j,...,j}^{k})\in\{1,...,r\}^{k}$ observed under a consecutive
measurement (\ref{47.1}) on a state $\rho_{j}$ is given by
\begin{equation}
\overset{\leadsto}{\sigma}_{out}^{(k)}(\overbrace{j,...,j}^{k})|\rho
_{j})=\frac{\overset{\leadsto}{\mathcal{M}}_{A|\leadsto1\leadsto...\leadsto
k}^{(r)}(\text{ }\overbrace{j,...,j}^{k})[\rho_{j}]}{\mu_{\overset{\leadsto
}{\mathcal{M}}_{A|\leadsto1\leadsto...\leadsto k}^{(r)}}(\underbrace{\text{
}j,...,j}_{k}\mid\rho_{j})}. \label{49}%
\end{equation}
As we note after Eq. (\ref{28}), for simplicity, we further omit the upper
decoration at the outcome $($ $\overbrace{j,...,j}^{k}).$

The relation between a state instrument $\mathcal{M}_{k}^{(r)}(j)$ and the POV
measure $\mathrm{M}_{k}^{(r)}(j)$, corresponding to this instrument and given
by (\ref{4_1}) and relation (\ref{49}) imply that, similarly to (\ref{30}) and
(\ref{31}), under an $N$-sequential conclusive state discrimination via a
noisy communication, the success probability (\ref{48}) admits also two other
equivalent representations:%
\begin{align}
&  \mathbb{P}_{\overset{\leadsto}{\mathcal{M}}_{A|\rightsquigarrow
1\leadsto...\leadsto N}^{(r)}}^{success}\left(  \text{ }\rho_{1},...,\rho
_{r}|q_{1},...,q_{r}\right)  |_{\Lambda_{1},...,\Lambda_{_{N}}}\label{49.1}\\
&  =\sum_{j=1,...,r}q_{j}\mathrm{tr}\left\{  \Lambda_{1}[\rho_{j}%
]\mathrm{M}_{1}^{(r)}(j)\right\}  \mathrm{tr}\left\{  \Lambda_{2}\left[
\overset{\leadsto}{\sigma}_{out}^{(1)}(j|\rho_{j})\right]  \text{ }%
\mathrm{M}_{2}^{(r)}(j)\right\} \nonumber\\
&  \ \ \ \ \ \ \ \ \ \ \ \ \ \ \ \ \ \ \times\cdots\times\mathrm{tr}\left\{
\Lambda_{N}\left[  \overset{\leadsto}{\sigma}_{out}^{(N-1)}(j,...,j|\rho
_{j})\right]  \mathrm{M}_{N}^{(r)}(j)\right\}  ,\nonumber
\end{align}
and
\begin{align}
&  \mathbb{P}_{\overset{\leadsto}{\mathcal{M}}_{A|\rightsquigarrow
1\leadsto...\leadsto N}^{(r)}}^{success}\left(  \text{ }\rho_{1},...,\rho
_{r}|q_{1},...,q_{r}\right)  |_{\Lambda_{1},...,\Lambda_{_{N}}}=\mathbb{P}%
_{\mathcal{M}_{1}^{(r)}}^{success}\left(  \text{ }\Lambda_{1}[\rho
_{1}],...,\Lambda_{1}[\rho_{r}]\mid q_{1},...,q_{r}\right) \label{49.2}\\
&  \times%
{\displaystyle\prod\limits_{n=2,...,N}}
\mathbb{P}_{\mathcal{M}_{n}^{(r)}}^{success}\left(  \Lambda_{n}\left[
\overset{\leadsto}{\sigma}_{out}^{(n-1)}(1,...,1|\rho_{1})\right]
,...,\Lambda_{n}\left[  \overset{\leadsto}{\sigma}_{out}^{(n-1)}%
(r,...,r|\rho_{r})\right]  \mid Q_{n-1}^{(1)},...,Q_{n-1}^{(r)}\right)
,\nonumber
\end{align}
\textbf{ }where
\begin{equation}
Q_{n-1}^{(j)}:=\frac{q_{j}\mu_{\overset{\leadsto}{\mathcal{M}}%
_{A|\rightsquigarrow1\leadsto...\leadsto n-1}^{(r)}}\left(  \text{
}j,...,j\mid\rho_{j}\right)  }{\sum_{j}q_{j}\mu_{\overset{\leadsto
}{\mathcal{M}}_{A|\rightsquigarrow1\leadsto...\leadsto n-1}^{(r)}}\left(
\text{ }j,...,j\mid\rho_{j}\right)  } \label{50}%
\end{equation}
are a priori probability of the conditional posterior state originated from
$\rho_{j}$ updated before a measurement of $n$-th receiver.

Also, in case of a noisy communication, the upper bound, similar to
(\ref{36}), takes the form
\begin{equation}
\mathbb{P}_{\overset{\leadsto}{\mathcal{M}}_{A|\rightsquigarrow1\leadsto
...\leadsto N}^{(r)}}^{success}\left(  \text{ }\rho_{1},...,\rho_{r}%
|q_{1},...,q_{r}\right)  |_{\Lambda_{1},...,\Lambda_{_{N}}}\leq\frac{1}%
{r}\left(  1+%
{\displaystyle\sum\limits_{1\leq i<j\leq r}}
\left\Vert \text{ }\Lambda_{1}\left[  q_{i}\rho_{i}-q_{j}\rho_{j}\right]
\right\Vert _{1}\right)  . \label{51}%
\end{equation}

Specified for a two-sequential discrimination between two arbitrary quantum
states $(N=2)$, representation (\ref{49.2}) for the success probability
reduces to%
\begin{align}
&  \mathbb{P}_{\overset{\leadsto}{\mathcal{M}}_{A|\rightsquigarrow1\leadsto
2}^{(r)}}^{success}\left(  \text{ }\rho_{1},...,\rho_{r}|q_{1},...,q_{r}%
\right)  |_{\Lambda_{1,}\Lambda_{2}}=\mathbb{P}_{\mathcal{M}_{1}^{(r)}%
}^{success}\left(  \Lambda_{1}[\rho_{1}],...,\Lambda_{1}[\rho_{r}]\mid
q_{1},...,q_{r}\text{ }\right) \label{52}\\
&  \times\mathbb{P}_{\mathcal{M}_{2}^{(r)}}^{success}\left(  \text{ }%
\Lambda_{2}\left[  \overset{\leadsto}{\sigma}_{out}^{(1)}(1|\rho_{1})\right]
,...,\Lambda_{2}\left[  \overset{\leadsto}{\sigma}_{out}^{(1)}(r|\rho
_{r})\right]  \mid Q_{1}^{(1)},...,Q_{1}^{(r)}\right)  ,\nonumber
\end{align}
where states $\overset{\leadsto}{\sigma}_{out}^{(1)}(j|\rho_{j})$ and updated
a priori probabilities $Q_{1}^{(j)}$ before a measurement of the second
receiver are given by%
\begin{align}
\overset{\leadsto}{\sigma}_{out}^{(1)}(j|\rho_{j})  &  =\frac{\mathcal{M}%
_{1}^{(2)}(j)\left[  \Lambda_{1}[\rho_{j}]\right]  }{\mu_{\mathcal{M}%
_{1}^{(2)}}(j|\Lambda_{1}[\rho_{j}])},\label{52.0}\\
Q_{1}^{(j)}  &  =\frac{q_{j}\mu_{\mathcal{M}_{1}^{(2)}}(j|\Lambda_{1}[\rho
_{j}])}{\mathbb{P}_{\mathcal{M}_{1}^{(2)}}^{success}\left(  \text{ }%
\Lambda_{1}\left[  \rho_{1}\right]  ,...,\Lambda_{1}\left[  \rho_{r}\right]
\mid q_{1},...,q_{r}\right)  },\nonumber
\end{align}
and
\begin{align}
&  \mathbb{P}_{\mathcal{M}_{2}^{(2)}}^{success}\left(  \Lambda_{2}\left[
\overset{\leadsto}{\sigma}_{out}^{(1)}(1|\rho_{1})\right]  ,\ldots,\Lambda
_{2}\left[  \overset{\leadsto}{\sigma}_{out}^{(1)}(r|\rho_{r})\right]  \mid
Q_{1}^{(1)},...,Q_{1}^{(r)}\right) \nonumber\\
&  =\sum_{j=1,...,r}Q_{1}^{(j)}\mathrm{tr}\left\{  \Lambda_{2}\left[
\overset{\leadsto}{\sigma}_{out}^{(1)}(j|\rho_{j})\right]  \text{ }%
\mathrm{M}_{2}^{(2)}(j)\right\}  . \label{52.1}%
\end{align}
From (\ref{52}) it follows that, for\emph{ a two-sequential discrimination}
between arbitrary $r\geq2$ \ quantum states, the optimal success probability
has the form%
\begin{align}
&  \mathbb{P}_{A|\leadsto1\leadsto2}^{opt.success}\left(  \rho_{1}%
,...,\rho_{2}|q_{1},...,q_{r}\right)  |_{\Lambda_{1},\Lambda_{2}}%
=\max_{\mathcal{M}_{1}^{(r)}}{\LARGE (}\mathbb{P}_{\mathcal{M}_{1}^{(r)}%
}^{success}\left(  \text{ }\Lambda_{1}[\rho_{1}],...,\Lambda_{1}[\rho_{r}]\mid
q_{1},...,q_{r}\right)  \text{ }\label{53}\\
&  \ \ \ \ \ \ \ \ \ \ \ \ \ \ \ \times\max_{\mathcal{M}_{2}^{(2)}}%
\mathbb{P}_{\mathcal{M}_{2}^{(2)}}^{success}\left(  \Lambda_{2}\left[
\overset{\leadsto}{\sigma}_{out}^{(1)}(1|\rho_{1})\right]  ,\ldots,\Lambda
_{2}\left[  \overset{\leadsto}{\sigma}_{out}^{(1)}(r|\rho_{r})\right]  \mid
Q_{1}^{(1)},...,Q_{1}^{(r)}\right)  {\LARGE )}\nonumber
\end{align}
and similarly to our note in Remark 2, is not in general equal to the product
of the optimal success probabilities of the first and the second receiver.

Taking further into account the upper bound (\ref{36}), found in
\cite{loubenets}, and also relations (\ref{52.0}), we derive the following
general statement.

\begin{theorem}
The optimal success probability for a two-sequential discrimination between
$r\geq2$ arbitrary quantum states under a noise admits the upper bound
\begin{align}
&  \mathbb{P}_{A|\leadsto1\leadsto2}^{opt.success}\left(  \text{ }\rho
_{1},...,\rho_{2}|q_{1},...,q_{r}\right)  |_{_{\Lambda_{1},\Lambda_{2}}%
}\label{54_}\\
&  \leq\frac{1}{r}\max_{\mathcal{M}_{1}^{(r)}}{\LARGE (}\mathbb{P}%
_{\mathcal{M}_{1}^{(r)}}^{success}\left(  \text{ }\Lambda_{1}[\rho
_{1}],...,\Lambda_{1}[\rho_{r}]\mid q_{1},...,q_{r}\right) \nonumber\\
&  +%
{\displaystyle\sum\limits_{1\leq i<j\leq r}}
\left\Vert \Lambda_{2}\left[  q_{i}\mathcal{M}_{1}^{(r)}(i)\left[  \Lambda
_{1}[\rho_{i}]\right]  -q_{j}\mathcal{M}_{1}^{(r)}(j)\left[  \Lambda_{1}%
[\rho_{j}]\right]  \right]  \right\Vert _{1}{\LARGE )},\nonumber
\end{align}
where%
\begin{equation}
\mathbb{P}_{\mathcal{M}_{1}^{(r)}}^{success}\left(  \text{ }\Lambda_{1}%
[\rho_{1}],...,\Lambda_{1}[\rho_{r}]\mid q_{1},...,q_{r}\right)  \leq\frac
{1}{r}\left(  1+%
{\displaystyle\sum\limits_{1\leq i<j\leq r}}
\left\Vert \Lambda_{1}[q_{i}\rho_{i}-q_{j}\rho_{j}]\right\Vert _{1}\right)  .
\label{54.1}%
\end{equation}
The equality in the first line of (\ref{54_}) holds for $r=2$ and reads:%
\begin{align}
&  \mathbb{P}_{A|\leadsto1\leadsto2}^{opt.success}\left(  \text{ }\rho
_{1},\rho_{2}|q_{1},q_{2}\right)  |_{\Lambda_{1},\Lambda_{2}}\label{55_}\\
&  =\max_{\mathcal{M}_{1}^{(2)}}\frac{1}{2}{\Large (}\mathrm{P}_{\mathcal{M}%
_{1}^{(2)}}^{success}\left(  \text{ }\Lambda_{1}[\rho_{1}],\Lambda_{1}%
[\rho_{2}]\mid q_{1},q_{2}\right) \nonumber\\
&  +\left\Vert \Lambda_{2}\left[  q_{1}\mathcal{M}_{1}^{(2)}(1)\left[
\Lambda_{1}[\rho_{1}]\right]  -q_{2}\mathcal{M}_{1}^{(2)}(2)\left[
\Lambda_{1}[\rho_{2}]\right]  \right]  \right\Vert _{1}{\LARGE ).}\nonumber
\end{align}

\end{theorem}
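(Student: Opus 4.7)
The plan is to start from the product representation (\ref{53}) and bound the inner maximum over $\mathcal{M}_{2}^{(r)}$ using the single-receiver bound (\ref{32.1})/(\ref{36}) applied to the conditional input of the second receiver. More precisely, after the first measurement $\mathcal{M}_{1}^{(r)}$, the second receiver faces a one-shot discrimination problem between the $r$ states $\Lambda_{2}[\overset{\leadsto}{\sigma}_{out}^{(1)}(j|\rho_{j})]$, weighted with the updated a~priori probabilities $Q_{1}^{(j)}$ from (\ref{52.0}). Invoking (\ref{32.1}) on this single-receiver subproblem I get
\begin{equation*}
\max_{\mathcal{M}_{2}^{(r)}}\mathbb{P}_{\mathcal{M}_{2}^{(r)}}^{success}\leq\frac{1}{r}\Bigl(1+\sum_{1\leq i<j\leq r}\bigl\Vert Q_{1}^{(i)}\Lambda_{2}[\overset{\leadsto}{\sigma}_{out}^{(1)}(i|\rho_{i})]-Q_{1}^{(j)}\Lambda_{2}[\overset{\leadsto}{\sigma}_{out}^{(1)}(j|\rho_{j})]\bigr\Vert_{1}\Bigr).
\end{equation*}

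Next I would unpack the normalizations. From (\ref{52.0}) the product $Q_{1}^{(j)}\overset{\leadsto}{\sigma}_{out}^{(1)}(j|\rho_{j})$ telescopes: the factor $\mu_{\mathcal{M}_{1}^{(r)}}(j|\Lambda_{1}[\rho_{j}])$ appearing in $\overset{\leadsto}{\sigma}_{out}^{(1)}$ cancels the identical factor in $Q_{1}^{(j)}$, leaving $q_{j}\,\mathcal{M}_{1}^{(r)}(j)[\Lambda_{1}[\rho_{j}]]/\mathbb{P}_{\mathcal{M}_{1}^{(r)}}^{success}$. Using linearity of $\Lambda_{2}$ and positive homogeneity of $\Vert\cdot\Vert_{1}$, the $i,j$-th trace norm above becomes
\begin{equation*}
\frac{1}{\mathbb{P}_{\mathcal{M}_{1}^{(r)}}^{success}}\bigl\Vert\Lambda_{2}\bigl[q_{i}\mathcal{M}_{1}^{(r)}(i)[\Lambda_{1}[\rho_{i}]]-q_{j}\mathcal{M}_{1}^{(r)}(j)[\Lambda_{1}[\rho_{j}]]\bigr]\bigr\Vert_{1}.
\end{equation*}
Multiplying the displayed upper bound by $\mathbb{P}_{\mathcal{M}_{1}^{(r)}}^{success}$ therefore cancels this denominator, producing exactly the bracketed expression in (\ref{54_}). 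Taking the outer maximum over $\mathcal{M}_{1}^{(r)}$, permitted by (\ref{53}), yields the main inequality. The auxiliary bound (\ref{54.1}) is just (\ref{32.1}) applied to the channel-transformed states $\Lambda_{1}[\rho_{j}]$ with priors $q_{j}$.

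For the equality claim in the case $r=2$, I would use that the single-receiver bound (\ref{32.1}) specializes to the Helstrom bound (\ref{36.1}), which is actually attained. Consequently the bound for the inner maximization is an \emph{equality} rather than an inequality, and every step of the above derivation preserves equality, yielding (\ref{55_}).

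The main work is bookkeeping: checking that the normalizations inside $Q_{1}^{(j)}\overset{\leadsto}{\sigma}_{out}^{(1)}(j|\rho_{j})$ collapse correctly and that the CPTP (hence linear) map $\Lambda_{2}$ can be pulled outside the combination inside the trace norm. The only genuinely substantive input is the single-receiver upper bound (\ref{32.1}) of \cite{loubenets} and the observation that, for $r=2$, this bound is saturated on the spectral projections (\ref{38}) of $q_{1}\Lambda_{1}[\rho_{1}]-q_{2}\Lambda_{1}[\rho_{2}]$; there is no analogous saturation known for $r>2$, which is precisely why the statement produces an equality only in the binary case.
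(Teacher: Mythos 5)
Your proposal is correct and follows essentially the same route as the paper: the authors likewise derive (\ref{54_}) by applying the single-receiver upper bound (\ref{32.1}) to the second receiver's conditional discrimination problem in the product representation (\ref{53}), using (\ref{52.0}) to cancel the normalizations so that $Q_{1}^{(j)}\Lambda_{2}[\overset{\leadsto}{\sigma}_{out}^{(1)}(j|\rho_{j})]$ collapses to $q_{j}\Lambda_{2}[\mathcal{M}_{1}^{(r)}(j)[\Lambda_{1}[\rho_{j}]]]/\mathbb{P}_{\mathcal{M}_{1}^{(r)}}^{success}$, and obtaining equality for $r=2$ from the attainability of the Helstrom bound. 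No gaps.
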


Since (\ref{55_}) constitutes the maximization of a convex function on a
convex set $\{\mathcal{M}_{1}^{(2)}\}$ of all quantum instruments of the first
receiver, the extremum in (\ref{55_}) is attained at the extreme points of
this set, that is, on the subset of quantum instruments with elements
$P(j)[\cdot]P(j),$ $j=1,2$, where $\mathrm{P}(j)$ are orthogonal projections
on $\mathcal{H}$ and $\sum_{j=1,2}\mathrm{P}(j)=\mathbb{I}_{\mathcal{H}}$.

Taking into account that, for any positive trace class operators
$T,\widetilde{T},$%
\begin{align}
||T-\widetilde{T}||_{1}  &  \leq||T||_{1}+||\widetilde{T}_{2}||_{1}%
\label{55.1}\\
&  =\mathrm{tr}[T]+\mathrm{tr}[\widetilde{T}],\nonumber
\end{align}
and that $\mathrm{tr}\left\{  \Lambda\left[  T\right]  \right\}
=\mathrm{tr}\left\{  T\right\}  $ for any trace class $T,$ we have in
(\ref{54_}):%
\begin{align}
&
{\displaystyle\sum\limits_{1\leq i<j\leq r}}
\left\Vert \Lambda_{2}\left[  q_{i}\mathcal{M}_{1}^{(r)}(i)[\Lambda_{1}%
[\rho_{i}]]-q_{j}\mathcal{M}_{1}^{(r)}(j)[\Lambda_{1}[\rho_{j}]]\right]
\right\Vert _{1}\label{56}\\
&  \leq%
{\displaystyle\sum\limits_{1\leq i<j\leq r}}
\left(  q_{i}\mathrm{tr}\left\{  \Lambda_{2}\left[  \mathcal{M}_{1}%
^{(r)}(i)\left[  \Lambda_{1}[\rho_{i}]\right]  \right]  \right\}
+q_{j}\mathrm{tr}\left\{  \Lambda_{2}\left[  \mathcal{M}_{1}^{(r)}%
(i)[\Lambda_{1}[\rho_{j}]]\right]  \right\}  \right) \nonumber\\
&  =%
{\displaystyle\sum\limits_{1\leq i<j\leq r}}
\left(  q_{i}\mathrm{tr}\left\{  \Lambda_{1}[\rho_{i}]\text{ }\mathrm{M}%
_{1}^{(r)}(i)\right\}  +q_{j}\mathrm{tr}\left\{  \Lambda_{1}[\rho_{j}]\text{
}\mathrm{M}_{1}^{(r)}(j)\right\}  \right) \nonumber\\
&  =(r-1)\text{ }\mathbb{P}_{\mathcal{M}_{1}^{(r)}}^{success}\left(
\Lambda_{1}[\rho_{1}],...,\Lambda_{1}[\rho_{r}]\mid q_{1},...,q_{r}\right)
,\nonumber
\end{align}
which is consistent with the upper bound (\ref{36}) if\textbf{ }$N=2.$

\section{Two-sequential conclusive discrimination under a depolarizing noise}

Consider the optimal success probability (\ref{55_}) for a two-sequential
conclusive discrimination in the case where a sender prepares two qubit states
$\rho_{j},$ \ $j=1,2,$ with Bloch representations
\begin{equation}
\rho_{j}=\frac{1}{2}(\mathbb{I}_{2}+\vec{r}_{j}\cdot\vec{\sigma}),\text{
\ \ }\vec{r}_{j}=\mathrm{tr}\{\rho_{j}\vec{\sigma}\}\in\mathbb{R}^{3},
\label{57}%
\end{equation}
and a priori probabilities $q_{j}$, $j=1,2,$ and quantum communication
channels between a sender and a receiver and between two receives are
depolarizing%
\begin{equation}
\Lambda_{1}^{(pol)}[T]:=(1-\gamma_{1})T+\gamma_{1}\mathrm{tr}\{T\}\text{
}\frac{\mathbb{I}_{2}}{2},\ \text{\ \ }\gamma_{1}\in\lbrack0,1],\text{
\ }\ n=1,2. \label{58}%
\end{equation}
Here, $\mathbb{I}_{2}$ is an identity operator and $T$ is an arbitrary linear
operator on\ $\mathbb{C}^{2}.$ For a qubit state $\rho_{j}$ with
representation (\ref{57}), relation (\ref{58}) reads
\begin{equation}
\Lambda_{1}^{(pol)}[\rho_{j}]=\frac{1}{2}\left(  \mathbb{I}_{2}+(1-\gamma
_{1})\text{ }\vec{r}_{j}\cdot\vec{\sigma}\right)  . \label{59}%
\end{equation}

For only one receiver $(N=1)$, the optimal success probability under a
depolarizing noise is given by the Helstrom bound (\ref{37}) for the
discrimination between noisy states $\Lambda_{1}^{(pol)}[\rho_{1}]$ and
$\Lambda_{1}^{(pol)}[\rho_{2}]$, that is:%
\begin{equation}
\mathbb{P}_{A|\leadsto1}^{opt.success}(\rho_{1},\rho_{2}|q_{1},q_{2}%
)|_{_{_{\gamma_{1}}}}:=\frac{1}{2}\left(  1+\left\Vert q_{1}\Lambda
_{1}^{(pol)}[\rho_{1}]-q_{2}\Lambda_{1}^{(pol)}[\rho_{2}]\right\Vert
_{1}\right)  \label{60}%
\end{equation}
Taking into account that, for a Hermitian operator on a finite-dimensional
Hilbert space, the trace norm is given by the sum of the absolute values of
its eigenvalues and that the eigenvalues of the qubit operator $\left(
q_{1}\Lambda_{1}^{(pol)}[\rho_{1}]-q_{2}\Lambda_{1}^{(pol)}[\rho_{2}]\right)
$ are equal to $\frac{1}{2}\left(  q_{1}-q_{2}\pm(1-\gamma_{1})\left\Vert
q_{1}\vec{r}_{1}-q_{2}\vec{r}_{2}\right\Vert _{\mathbb{R}^{3}}\right)  ,$ we
derive
\begin{align}
\mathbb{P}_{A|\leadsto1}^{opt.success}(\rho_{1},\rho_{2}|q_{1},q_{2}%
)|_{_{_{\gamma_{1}}}}  &  =\frac{1}{2}+\frac{1}{4}\left\vert \text{ }%
q_{1}-q_{2}+(1-\gamma_{1})\left\Vert q_{1}\vec{r}_{1}-q_{2}\vec{r}%
_{2}\right\Vert _{\mathbb{R}^{3}}\right\vert \label{60.1}\\
&  +\frac{1}{4}\left\vert \text{ }q_{1}-q_{2}-(1-\gamma_{1})\left\Vert
q_{1}\vec{r}_{1}-q_{2}\vec{r}_{2}\right\Vert _{\mathbb{R}^{3}}\right\vert
.\nonumber
\end{align}
From (\ref{60}) and (\ref{60.1}) it follows that%
\begin{equation}
\mathbb{P}_{A|\leadsto1}^{opt.success}(\rho_{1},\rho_{2}|q_{1},q_{2}%
)|_{_{_{\gamma_{1}}}}=\frac{1}{2}+\frac{1}{2}(1-\gamma_{1})\left\Vert
q_{1}\vec{r}_{1}-q_{2}\vec{r}_{2}\right\Vert _{\mathbb{R}^{3}} \label{61}%
\end{equation}
if
\begin{align}
(1-\gamma_{1})\left\Vert q_{1}\vec{r}_{1}-q_{2}\vec{r}_{2}\right\Vert
_{\mathbb{R}^{3}}  &  >\left\vert q_{1}-q_{2}\right\vert \text{ \ \ }%
\Leftrightarrow\label{62}\\
\frac{1}{2}+\frac{1}{2}(1-\gamma_{1})\left\Vert q_{1}\vec{r}_{1}-q_{2}\vec
{r}_{2}\right\Vert _{\mathbb{R}^{3}}  &  \geq\max\{q_{1},q_{2}\},\nonumber
\end{align}
and
\begin{equation}
\mathbb{P}_{A|\leadsto1}^{opt.success}(\rho_{1},\rho_{2}|q_{1},q_{2}%
)|_{_{_{\gamma_{1}}}}=\max\{q_{1},q_{2}\} \label{63}%
\end{equation}
otherwise.

Relations (\ref{60}) --(\ref{63}) imply
\begin{equation}
\mathbb{P}_{A|\leadsto1}^{opt.success}(\rho_{1},\rho_{2}|q_{1},q_{2}%
)|_{_{_{\gamma_{1}}}}=\max\left\{  \frac{1}{2}+\frac{1}{2}(1-\gamma
_{1})\left\Vert q_{1}\vec{r}_{1}-q_{2}\vec{r}_{2}\right\Vert _{\mathbb{R}^{3}%
},\max\{q_{1},q_{2}\}\right\}  . \label{64}%
\end{equation}

\begin{remark}
If initial qubit states are pure: $\rho_{1}=|\psi_{1}\rangle\langle\psi_{1}|,$
$\rho_{2}=|\psi_{2}\rangle\langle\psi_{2}|$, then $\left\Vert \vec{r}%
_{1}\right\Vert _{\mathbb{R}^{3}}=\left\Vert \vec{r}_{2}\right\Vert
_{\mathbb{R}^{3}}=1$ and, as well-known,
\begin{align}
\left\Vert q_{1}\vec{r}_{1}-q_{2}\vec{r}_{2}\right\Vert _{\mathbb{R}^{3}}  &
=\sqrt{q_{1}^{2}+q_{2}^{2}-2q_{1}q_{2}\vec{r}_{1}\cdot\vec{r}_{2}}%
\label{64_1}\\
&  =\sqrt{1-4q_{1}q_{2}\left\vert \langle\psi_{2}|\psi_{1}\rangle\right\vert
^{2}}.\nonumber
\end{align}
Therefore, for pure states, expression (\ref{64}) reduces to%
\begin{align}
&  \mathbb{P}_{A|\leadsto1}^{opt.success}(\rho_{1},\rho_{2}|q_{1}%
,q_{2})|_{_{_{\gamma_{1}}}}\label{64_2}\\
&  =\max\left\{  \frac{1}{2}+\frac{1}{2}(1-\gamma_{1})\sqrt{1-4q_{1}%
q_{2}\left\vert \langle\psi_{2}|\psi_{1}\rangle\right\vert ^{2}},\text{ }%
\max\{q_{1},q_{2}\}\right\}  .\nonumber
\end{align}
Note that since%
\begin{equation}
\left\Vert q_{1}\vec{r}_{1}-q_{2}\vec{r}_{2}\right\Vert _{\mathbb{R}^{3}%
}=\sqrt{1-4q_{1}q_{2}\left\vert \langle\psi_{2}|\psi_{1}\rangle\right\vert
^{2}}\geq\left\vert q_{1}-q_{2}\right\vert , \label{64_3}%
\end{equation}
for $\gamma_{1}=0,$ maximum (\ref{64_2}) reduces the well-known expression
\begin{equation}
\frac{1}{2}+\frac{1}{2}\sqrt{1-4q_{1}q_{2}\left\vert \langle\psi_{2}|\psi
_{1}\rangle\right\vert ^{2}} \label{pure_hel}%
\end{equation}
for the Helstrom bound in case of pure initial states.
\end{remark}

For two receivers $(N=2)$, we derive the following result proved rigorously in Appendix.

\begin{theorem}
Let $\rho_{1}$ and $\rho_{2}$ be arbitrary qubit states given with a priori
probabilities $q_{1}$, $q_{2}$. In case of depolarizing channels (\ref{58}),
the optimal success probability (\ref{55_}) of a two-sequential discrimination
between qubit states $\rho_{1}$ and $\rho_{2}$ is given by%
\begin{align}
&  \mathbb{P}_{A|\leadsto1\leadsto2}^{opt.success}\left(  \rho_{1},\rho
_{2}|q_{1},q_{2}\right)  |_{_{\gamma_{_{1}},\gamma_{_{2}}}}\label{65}\\
&  =\max\left\{  \left(  1-\frac{\gamma_{2}}{2}\right)  \left(  \frac{1}%
{2}+\frac{1}{2}(1-\gamma_{1})\left\Vert q_{1}\vec{r}_{1}-q_{2}\vec{r}%
_{2}\right\Vert _{\mathbb{R}^{3}}\right)  ,\text{ }\max\{q_{1},q_{2}%
\}\right\}  ,\nonumber
\end{align}
where $\vec{r}_{j},$ $j=1,2$ are Bloch vectors of states $\rho_{j}$ in
representation (\ref{57}).
\end{theorem}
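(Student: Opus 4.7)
The plan is to reduce (\ref{55_}) to a finite-parameter optimization via the extremal argument already indicated after Theorem~3, and then evaluate it explicitly for the depolarizing channel. Because (\ref{55_}) maximizes a convex functional of $\mathcal{M}_{1}^{(2)}$ over a convex set, the optimum is attained on the extreme instruments $\mathcal{M}_{1}^{(2)}(j)[\cdot]=\mathrm{P}(j)[\cdot]\mathrm{P}(j)$ with $\mathrm{P}(1)+\mathrm{P}(2)=\mathbb{I}_{2}$. On a qubit these fall into two families: the trivial ones $\mathrm{P}(1)\in\{0,\mathbb{I}_{2}\}$, for which $T:=q_{1}\mathcal{M}_{1}^{(2)}(1)[\Lambda_{1}[\rho_{1}]]-q_{2}\mathcal{M}_{1}^{(2)}(2)[\Lambda_{1}[\rho_{2}]]$ equals $\pm q_{j}\Lambda_{1}[\rho_{j}]$ so that the first-receiver success probability and $\|\Lambda_{2}[T]\|_{1}$ both equal $q_{j}$, contributing $\max\{q_{1},q_{2}\}$ to the overall maximum; and the rank-one ones $\mathrm{P}(1)=|\phi\rangle\langle\phi|$, $\mathrm{P}(2)=|\phi^{\bot}\rangle\langle\phi^{\bot}|$, which I parametrize by $|\phi\rangle$.

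For the rank-one family, set $a:=q_{1}\langle\phi|\Lambda_{1}[\rho_{1}]|\phi\rangle\ge 0$ and $b:=q_{2}\langle\phi^{\bot}|\Lambda_{1}[\rho_{2}]|\phi^{\bot}\rangle\ge 0$; then the first-receiver success probability in (\ref{55_}) is $a+b$ and $T=a|\phi\rangle\langle\phi|-b|\phi^{\bot}\rangle\langle\phi^{\bot}|$. Applying (\ref{58}) gives $\Lambda_{2}[T]=(1-\gamma_{2})T+\tfrac{\gamma_{2}}{2}(a-b)\mathbb{I}_{2}$, which is diagonal in $\{|\phi\rangle,|\phi^{\bot}\rangle\}$ with eigenvalues $\alpha,\beta$ satisfying $\alpha+\beta=a-b$ and $\alpha-\beta=(a+b)(1-\gamma_{2})$. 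A short sign analysis (comparing $|\alpha+\beta|=|a-b|$ with $|\alpha-\beta|=(a+b)(1-\gamma_{2})$ to decide whether $\alpha,\beta$ have equal or opposite signs) yields the closed form
\[
\|\Lambda_{2}[T]\|_{1}=\max\{\,|a-b|,\,(a+b)(1-\gamma_{2})\,\}.
\]
Using $\tfrac{1}{2}(x+y+|x-y|)=\max(x,y)$ and $\tfrac{1}{2}[(a+b)+(a+b)(1-\gamma_{2})]=(1-\tfrac{\gamma_{2}}{2})(a+b)$, the bracketed quantity in (\ref{55_}) at a rank-one instrument equals
\[
\max\bigl\{\,\max(a,b),\,(1-\tfrac{\gamma_{2}}{2})(a+b)\,\bigr\}.
\]

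Finally, I maximize over $|\phi\rangle$ and combine with the trivial contributions. Since $a,b\le q_{j}\lambda_{\max}(\Lambda_{1}[\rho_{j}])\le q_{j}$, one has $\max_{\phi}\max(a,b)\le\max\{q_{1},q_{2}\}$, so this piece is subsumed by the trivial extremes. On the other hand, $\max_{\phi}(a+b)$ is by definition the optimal single-receiver success probability $\mathbb{P}_{A|\leadsto 1}^{opt.success}(\rho_{1},\rho_{2}\mid q_{1},q_{2})|_{\gamma_{1}}$, whose explicit qubit value is (\ref{64}). Assembling everything,
\[
\mathbb{P}_{A|\leadsto 1\leadsto 2}^{opt.success}=\max\bigl\{\max\{q_{1},q_{2}\},\;(1-\tfrac{\gamma_{2}}{2})\,\mathbb{P}_{A|\leadsto 1}^{opt.success}|_{\gamma_{1}}\bigr\},
\]
and substituting (\ref{64}) together with $(1-\gamma_{2}/2)\max\{q_{1},q_{2}\}\le\max\{q_{1},q_{2}\}$ to collapse the nested maximum reproduces (\ref{65}). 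The main obstacle I foresee is the eigenvalue/sign bookkeeping that converts $\|\Lambda_{2}[T]\|_{1}$ into the compact $\max\{|a-b|,(a+b)(1-\gamma_{2})\}$; the remaining steps are routine algebra that chain the extremal reduction, the Helstrom-type formula (\ref{64}), and the monotonicity of $(1-\gamma_{2}/2)$.
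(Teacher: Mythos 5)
Your proposal is correct and follows the same overall skeleton as the paper's Appendix: both reduce (\ref{55_}) to the extreme projective instruments of the first receiver, split these into the trivial family (yielding $\max\{q_{1},q_{2}\}$) and the rank-one family, and then evaluate everything explicitly for the depolarizing channel. Where you genuinely diverge is in the final maximization. The paper computes the trace norm as the sum of two absolute values (its Eq. (\ref{A7})), then runs a condition-based case analysis (\ref{A13})--(\ref{A14}) and compares two threshold values $\gamma_{2}^{(1)},\gamma_{2}^{(2)}$ of the noise parameter to decide which branch wins. You instead observe that $\Lambda_{2}[T]$ is diagonal with eigenvalues satisfying $\alpha+\beta=a-b$, $\alpha-\beta=(a+b)(1-\gamma_{2})$, and invoke the identity $|\alpha|+|\beta|=\max\{|\alpha+\beta|,|\alpha-\beta|\}$ together with $\tfrac12(x+y+|x-y|)=\max(x,y)$ to collapse the whole expression into nested maxima with no case analysis at all; this replaces roughly half of the paper's Appendix with two lines of algebra and makes the origin of the two branches in (\ref{65}) transparent. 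One small imprecision: $\max_{\phi}(a+b)$ over \emph{rank-one} projections equals $\tfrac12+\tfrac12(1-\gamma_{1})\left\Vert q_{1}\vec{r}_{1}-q_{2}\vec{r}_{2}\right\Vert _{\mathbb{R}^{3}}$ (the paper's (\ref{A18})), not the full single-receiver optimum (\ref{64}), since the latter also admits the trivial measurement; the two differ exactly when $(1-\gamma_{1})\left\Vert q_{1}\vec{r}_{1}-q_{2}\vec{r}_{2}\right\Vert _{\mathbb{R}^{3}}<|q_{1}-q_{2}|$. This does not affect your conclusion, because in that regime the outer maximum is carried by the trivial extremes anyway and $(1-\gamma_{2}/2)\max\{q_{1},q_{2}\}\leq\max\{q_{1},q_{2}\}$ absorbs the discrepancy, but the phrase ``by definition'' should be softened accordingly.
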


For an arbitrary $\gamma_{1}\in\lbrack0,1]$ and $\gamma_{2}=0,$ expression
(\ref{65}) coincides with expression (\ref{64}) and this agrees with
Proposition 2.

In Fig.1 and Fig. 2, we present the numerical results related to Eqs.
(\ref{64}) and \ref{65}. For simplicity, we assume that $\gamma_{1}=\gamma
_{2}=\gamma.$

In Fig.1, we take qubit states $\rho_{1}$ and $\rho_{2}$ with the Bloch
vectors
\begin{equation}
\vec{r}_{1}=(0.3,0.3,0.3),\ \ \ \vec{r}_{2}=(0.3,0.3,-0.3). \label{ex1}%
\end{equation}
In Fig.1(b) we consider the case with equal a priori probabilities
$q_{1}=q_{2}=0.5$ and in Fig. 1(b) with the a priori probabilities
$q_{1}=0.55,$ $q_{2}=0.45$.

In Fig.2, we take qubit states $\rho_{1}$ and $\rho_{2}$ with the Bloch
vectors
\begin{equation}
\vec{r}_{1}=(0.2,0.3,-0.4),\ \ \ \vec{r}_{2}=(-0.2,-0.3,0.35), \label{ex2}%
\end{equation}
and with the a priori probabilities $q_{1}=q_{2}=0.5$ in Fig.2(a) and
$q_{1}=0.55,$ $q_{2}=0.45$ in Fig.2(b).

In Figs.1-2: (i) the solid red line is the Helstrom bound (\ref{37}) for
states $\rho_{1}$ and $\rho_{2}$ and the corresponding a priori probabilities
$q_{1},q_{2}$; (ii) the solid black line is the optimal success probability
(\ref{64}) for one receiver ($N=1$) under a depolarizing channel between a
sender and a receiver; (iii) the solid blue line is the optimal two-sequential
success probability (\ref{65}) under depolarizing quantum channels between a
sender and a receiver and between two receivers.

According to the presented numerical results, if $\gamma$ is not so large, the
dependence of the optimal success probability (\ref{65}) in $\gamma$ is
presented by a nearly straight line. This is since, for the considered states
and a priori probabilities, in the corresponding regions of $\gamma,$ the
impact in (\ref{65}) of the term quadratic in $\gamma$ is much smaller than
that of the term linear in $\gamma$.

\section{Conclusion}

In the present article, we have developed a general framework for the
description of an $N$-sequential conclusive discrimination between any number
$r\geq2$ of arbitrary quantum states, pure or mixed. For this general
sequential discrimination scenario, we have:

\begin{itemize}
\item derived (Proposition 1)three mutually equivalent general representations
(\ref{25}), (\ref{30}), (\ref{31}) for the success probability;

\item found (Theorem 1) a new general condition on $r\geq2$ quantum states
sufficient for the optimal success probability to be equal to the optimal
success probability of the first receiver for any number $N\geq2$ of further
sequential receivers and specified the corresponding optimal protocol
(\ref{44.2});

\item shown (Theorem 2) that, in case of discrimination between two arbitrary
quantum states, this sufficient condition is always fulfilled, so that the
optimal success probability of an $N$-sequential conclusive discrimination
between two arbitrary quantum states is given by the Helstrom bound for any
number $N\geq2$ of sequential receivers and is attained under the optimal
protocols (\ref{39}) and (\ref{39_1}). Each of these optimal protocols is
general in the sense that it is true for the $N$-sequential conclusive
sequential discrimination between any two quantum states, pure or mixed, and
of an arbitrary dimension.

\item explicitly constructed (Proposition 2) receivers' indirect measurements
implementing the general optimal protocol (\ref{39}) for the $N$-sequential
conclusive discrimination between any two quantum states.
\end{itemize}

Furthermore, we have extended our general framework to include by
(\ref{49.1}), (\ref{49.2}) the case of an $N$-sequential conclusive
discrimination between $r\geq2$ arbitrary quantum states under arbitrary noisy
quantum communication channels. For the optimal success probability in case of
a two-sequential discrimination under a noise between $r\geq2$ arbitrary
quantum states, we have specified (Theorem 3) a new general upper bound
(\ref{54_}), which is attained if $r=2$.

The developed general framework is true for any number $N\geq1$ of sequential
receivers, any number $r\geq2$ of arbitrary quantum states, pure or mixed, to
be discriminated, all types of receivers' quantum measurements and arbitrary
noisy quantum communication channels.

As an example, we analyze analytically (Theorem 4) and further numerically a
two-sequential conclusive discrimination between two qubit states via
depolarizing quantum channels.

The new general results derived within the developed framework are important
both from the theoretical point of view and for a successful multipartite
quantum communication via noisy channels.

\section*{Acknowledgement}

The study by E. R. Loubenets in Section 2, Section 3 and in Section 4.1 of
this work was supported by the Russian Science Foundation under the Grant No
19-11-00086 and performed at the Steklov Mathematical Institute of Russian
Academy of Sciences. The study by E. R. Loubenets in Section 5 and in Section
6 was performed at the National Research University Higher School of
Economics. The study by Min Namkung in Section 4.2, Section 5 and in Section 6
was performed at the National Research University Higher School of Economics
and during the revision -- at the Kyung Hee University. Min Namkung
acknowledges support from the National Research Foundation of Korea (NRF)
grant (NRF2020M3E4A1080088) funded by the Korea government (Ministry of
Science and ICT).

The authors acknowledge the valuable comments of anonymous Reviewers. Also,
Min Namkung thanks Prof. Younghun Kwon at Hanyang University (ERICA) for his
insightful discussion.

\section{Appendix}

In this Section, we present the proof of Theorem 4. In the qubit case, the
extreme points of set $\{\mathcal{M}_{1}^{(2)}\}$ in (\ref{55_}) are given by
(i) the state instruments $\left\{  \Pi_{\vec{n}}^{(\pm)}(),\text{ }\vec{n}%
\in\mathbb{R}^{3},\text{ }\left\Vert \vec{n}\right\Vert _{\mathbb{R}^{3}%
}=1\right\}  \ $with elements%
\begin{align}
\Pi_{\vec{n}}^{(\pm)}(1)[\cdot]  &  =\mathrm{E}_{\sigma_{\vec{n}}}^{(\pm
)}\text{ }[\cdot]\text{ }\mathrm{E}_{\sigma_{\vec{n}}}^{(\pm)},\text{
\ \ \ }\Pi_{\vec{n}}^{(\pm)}(2)[\cdot]=\mathrm{E}_{\sigma_{\vec{n}}}^{(\mp
)}\text{ }[\cdot]\text{ }\mathrm{E}_{\sigma_{\vec{n}}}^{(\mp)}, \tag{A1}%
\label{A1}\\
\mathrm{E}_{\sigma_{\vec{n}}}^{(\pm)}  &  =\frac{\mathbb{I}\text{ }%
\mathbb{\pm}\text{ }\sigma_{\vec{n}}}{2},\nonumber
\end{align}
where $\mathrm{E}_{\sigma_{\vec{n}}}^{(\pm)}$ are the spectral projections
corresponding to eigenvalues $\pm1$ of a Hermitian operator $\sigma_{\vec{n}%
}:=\vec{n}\cdot\vec{\sigma}$ describing the projection of qubit spin
$\vec{\sigma}$ on a unit direction $\vec{n}$ and (ii) the state instruments
$\Pi_{j}(\cdot),$ $j=1,2,$ with elements
\begin{align}
\Pi_{1}(1)[\cdot]  &  =\mathbb{I}_{2}\text{ }[\cdot]\text{ }\mathbb{I}%
_{2},\text{ \ \ \ }\Pi_{1}(2)[\cdot]=0,\tag{A2}\label{A2}\\
\Pi_{2}(1)[\cdot]  &  =0,\text{ \ \ \ }\Pi_{2}(2)=\mathbb{I}_{2}\text{ }%
[\cdot]\text{ }\mathbb{I}_{2}.\nonumber
\end{align}
Therefore,
\begin{align}
\mathbb{P}_{A|\leadsto1\leadsto2}^{opt.success}\left(  \rho_{1},\rho_{2}%
|q_{1},q_{2}\right)  |_{_{\gamma_{_{1}},\gamma_{_{2}}}}  &  =\max_{\Pi\in
\{\Pi_{\vec{n}}^{(\pm)},\Pi_{j}\}}\mathcal{E}_{\Pi}(\rho_{1},\rho_{2}%
,q_{1},q_{2})|_{_{\gamma_{1},\gamma_{2}}},\tag{A3}\label{A3}\\
&  =\max\left\{  \max_{\vec{n}}\mathcal{E}_{\Pi_{\vec{n}}^{(\pm)}}(\rho
_{1},\rho_{2},q_{1},q_{2})|_{_{\gamma_{1},\gamma_{2}}},\max_{j-1,2}%
\mathcal{E}_{\Pi_{j}}(\rho_{1},\rho_{2},q_{1},q_{2})\right\}  ,\nonumber
\end{align}
where $\mathcal{E}_{\Pi}(\rho_{1},\rho_{2},q_{1},q_{2})|_{_{\gamma_{1}%
,\gamma_{2}}}$\textrm{ }is the expression to be maximized in\textrm{
}(\ref{55_}).

From Eqs. (\ref{58}), (\ref{59}) and (\ref{A1}) it follows
\begin{equation}
q_{j}\mathrm{tr}\left\{  \Lambda_{1}[\rho_{j}]\mathrm{E}_{\sigma_{\vec{n}}%
}^{(\pm)}\right\}  =\frac{1}{2}q_{j}\pm\frac{1}{2}q_{j}(1-\gamma_{1})(\vec
{r}_{j}\cdot\vec{n}), \tag{A4}\label{A4}%
\end{equation}
so that, in expression $\mathcal{E}_{\Pi_{\vec{n}}^{(\pm)}}$, standing under
maximum in (\ref{55_}),\emph{ for each extreme state instrument} $\Pi_{\vec
{n}}^{(\pm)}$, the first term equals to
\begin{align}
\mathbb{P}_{\Pi_{\vec{n}}^{(\pm)}}^{success}\left(  \text{ }\rho_{1},\rho
_{2}\mid q_{1},q_{2}\right)  |_{_{\gamma_{1}}}  &  :=\mathbb{P}_{\Pi_{\vec{n}%
}^{(\pm)}}^{success}\left(  \text{ }\Lambda_{1}[\rho_{1}],\Lambda_{1}[\rho
_{2}]\mid q_{1},q_{2}\right)  =\tag{A5}\label{A5}\\
&  =q_{1}\mathrm{tr}\left\{  \Lambda_{1}[\rho_{1}]\text{ }\mathrm{E}%
_{\sigma_{\vec{n}}}^{(\pm)}\right\}  +q_{2}\mathrm{tr}\left\{  \Lambda
_{1}[\rho_{2}]\text{ }\mathrm{E}_{\sigma_{\vec{n}}}^{(\mp)}\right\}
\nonumber\\
&  =\frac{1}{2}\pm\frac{1}{2}\left(  \text{ }1-\gamma_{1}\right)  \left(
\text{ }q_{1}\vec{r}_{1}-q_{2}\vec{r}_{2}\right)  \cdot\vec{n},\nonumber
\end{align}
and the second term -- \emph{the trace norm} of the Hermitian operator%
\begin{align}
&  \Lambda_{2}\left[  q_{1}\Pi_{\vec{n}}^{(\pm)}(1)\left[  \Lambda_{1}%
[\rho_{1}]\right]  -q_{2}\Pi_{\vec{n}}^{(\pm)}(2)\left[  \Lambda_{1}[\rho
_{2}]\right]  \right] \tag{A6}\label{A6}\\
&  =\gamma_{2}\frac{\mathbb{I}}{2}\left[  \text{ }q_{1}\mathrm{tr}\left\{
\Lambda_{1}[\rho_{1}]\mathrm{E}_{\sigma_{\vec{n}}}^{(\pm)}\right\}
-q_{2}\mathrm{tr}\left\{  \Lambda_{1}[\rho_{2}]\mathrm{E}_{\sigma_{\vec{n}}%
}^{(\mp)}\right\}  \right] \nonumber\\
&  +(1-\gamma_{2})\left[  q_{1}\mathrm{E}_{\sigma_{\vec{n}}}^{(\pm)}\text{
}\mathrm{tr}\left\{  \Lambda_{1}[\rho_{1}]\mathrm{E}_{\sigma_{\vec{n}}}%
^{(\pm)}\right\}  -q_{2}\mathrm{E}_{\sigma_{\vec{n}}}^{(\mp)}\text{
}\mathrm{tr}\left\{  \Lambda_{1}[\rho_{2}]\mathrm{E}_{\sigma_{\vec{n}}}%
^{(\mp)}\right\}  \right] \nonumber\\
&  =\mathrm{E}_{\sigma_{\vec{n}}}^{(\pm)}\left\{  \left(  1-\frac{\gamma_{2}%
}{2}\right)  \mathbb{P}_{\Pi_{\vec{n}}^{(\pm)}}^{success}|_{_{\gamma_{1}}%
}-q_{2}\mathrm{tr}\left\{  \Lambda_{1}[\rho_{2}]\mathrm{E}_{\sigma_{\vec{n}}%
}^{(\mp)}\right\}  \right\} \nonumber\\
&  -\mathrm{E}_{\sigma_{\vec{n}}}^{(\mp)}\left\{  \left(  1-\frac{\gamma_{2}%
}{2}\right)  \mathbb{P}_{\Pi_{\vec{n}}^{(\pm)}}^{success}|_{_{\gamma_{1}}%
}-q_{1}\mathrm{tr}\left\{  \Lambda_{1}[\rho_{1}]\mathrm{E}_{\sigma_{\vec{n}}%
}^{(\pm)}\right\}  \right\}  ,\nonumber
\end{align}
is equal to%
\begin{align}
&  \left\vert \left(  1-\frac{\gamma_{2}}{2}\right)  \mathbb{P}_{\Pi_{\vec{n}%
}^{(\pm)}}^{success}|_{_{\gamma_{1}}}-q_{2}\mathrm{tr}\left\{  \Lambda
_{1}[\rho_{2}]\text{ }\mathrm{E}_{\sigma_{\vec{n}}}^{(\mp)}\right\}
\right\vert \tag{A7}\label{A7}\\
&  +\left\vert \left(  1-\frac{\gamma_{2}}{2}\right)  \mathbb{P}_{\Pi_{\vec
{n}}^{(\pm)}}^{success}|_{_{\gamma_{1}}}-q_{1}\mathrm{tr}\left\{  \Lambda
_{1}[\rho_{1}]\text{ }\mathrm{E}_{\sigma_{\vec{n}}}^{(\pm)}\right\}
\right\vert .\nonumber
\end{align}
Here, the terms $q_{j}\mathrm{tr}\left\{  \Lambda_{1}[\rho_{j}]E_{\sigma
_{\vec{n}}}^{(\pm)}\right\}  $ and $\mathbb{P}_{\Pi_{\vec{n}}^{(\pm)}%
}^{success}(\rho_{1},\rho_{2}|q_{1},q_{2})|_{\gamma_{1}}$ are given by
(\ref{A4}) and (\ref{A5}), respectively, and, in (\ref{A6}), we denote
$\mathbb{P}_{\Pi_{\vec{n}}^{(\pm)}}^{success}|_{_{\gamma_{1}}}:=\mathbb{P}%
_{\Pi_{\vec{n}}^{(\pm)}}^{success}(\rho_{1},\rho_{2}|q_{1},q_{2})|_{\gamma
_{1}}$ for short.

For the extreme state instruments (\ref{A2}), we have%
\begin{align}
\mathbb{P}_{\Pi_{j}}^{success}(\rho_{1},\rho_{2}  &  \mid q_{1},q_{2}%
|_{_{\gamma_{1}}}:=\mathbb{P}_{\Pi_{j}}^{success}\left(  \Lambda_{1}[\rho
_{1}],\Lambda_{1}[\rho_{2}]\mid q_{1},q_{2}\right) \tag{A8}\label{A8}\\
&  =q_{j},\nonumber
\end{align}
also the trace norm standing in (\ref{55_}) is equal to $q_{j}.$

Eqs. (\ref{A5})-(\ref{A8}) imply that, for the state instruments (\ref{A1})
and (\ref{A2}), expression $\mathcal{E}_{\Pi}(\cdot)|_{_{\gamma_{1},\gamma
_{2}}}$\textrm{ }in (\ref{A3}),\textrm{ }takes the following values:%
\begin{align}
\mathcal{E}_{\Pi_{\vec{n}}^{(\pm)}}(\rho_{1},\rho_{2}|q_{1},q_{2}%
)|_{_{\gamma_{1},\gamma_{2}}}  &  =\frac{1}{2}\mathbb{P}_{\Pi_{\vec{n}}%
^{(\pm)}}^{success}(\rho_{1},\rho_{2}|q_{1},q_{2})|_{\gamma_{1}}%
\tag{A9}\label{A9}\\
&  +\frac{1}{2}\left\vert \left(  1-\frac{\gamma_{2}}{2}\right)
\mathbb{P}_{\Pi_{\vec{n}}^{(\pm)}}^{success}(\rho_{1},\rho_{2}\mid q_{1}%
,q_{2})|_{_{\gamma_{1}}}-q_{2}\mathrm{tr}\left\{  \Lambda_{1}[\rho_{2}]\text{
}\mathrm{E}_{\sigma_{\vec{n}}}^{(\mp)}\right\}  \right\vert \nonumber\\
&  +\frac{1}{2}\left\vert \left(  1-\frac{\gamma_{2}}{2}\right)
\mathbb{P}_{\Pi_{\vec{n}}^{(\pm)}}^{success}(\rho_{1},\rho_{2}\mid q_{1}%
,q_{2})|_{_{\gamma_{1}}}-q_{1}\mathrm{tr}\left\{  \Lambda_{1}[\rho_{1}]\text{
}\mathrm{E}_{\sigma_{\vec{n}}}^{(\pm)}\right\}  \right\vert ,\nonumber\\
& \nonumber\\
\mathcal{E}_{\Pi_{j}}(\rho_{1},\rho_{2}|q_{1},q_{2})|_{_{\gamma_{1},\gamma
_{2}}}  &  =q_{j},\text{ \ \ }j=1,2. \tag{A10}\label{A10}%
\end{align}
Substituting (\ref{A9}), (\ref{A10}) into (\ref{A3}), we have%
\begin{equation}
\mathbb{P}_{A|\leadsto1\leadsto2}^{opt.success}\left(  \rho_{1},\rho_{2}%
|q_{1},q_{2}\right)  |_{_{\gamma_{_{1}},\gamma_{_{2}}}}=\max\left\{
\max_{\vec{n}}\mathcal{E}_{\Pi_{\vec{n}}^{(\pm)}}(\rho_{1},\rho_{2}%
,q_{1},q_{2})|_{_{\gamma_{1},\gamma_{2}}},\text{ }\max\mathcal{\{}q_{1}%
,q_{2}\right\}  \tag{A11}\label{A11}%
\end{equation}

Furthermore, relation (\ref{A9}) implies
\begin{equation}
\mathcal{E}_{\Pi_{\vec{n}}^{(\pm)}}(\rho_{1},\rho_{2}|q_{1},q_{2}%
)|_{_{\gamma_{1},\gamma_{2}}}=(1-\frac{\gamma_{2}}{2})\text{ }\mathbb{P}%
_{\Pi_{\vec{n}}^{(\pm)}}^{success}(\rho_{1},\rho_{2}|q_{1},q_{2})|_{\gamma
_{1}} \tag{A12}\label{A12}%
\end{equation}
if%
\begin{equation}
\left(  1-\frac{\gamma_{2}}{2}\right)  \mathbb{P}_{\Pi_{\vec{n}}^{(\pm)}%
}^{success}(\rho_{1},\rho_{2}|q_{1},q_{2})|_{_{\gamma_{1}}}\geq\max\left\{
q_{1}\mathrm{tr}\left\{  \Lambda_{1}[\rho_{1}]\text{ }\mathrm{E}_{\sigma
_{\vec{n}}}^{(\pm)}\right\}  ,q_{2}\mathrm{tr}\left\{  \Lambda_{1}[\rho
_{2}]\text{ }\mathrm{E}_{\sigma_{\vec{n}}}^{(\pm)}\right\}  \right\}  ,
\tag{A13}\label{A13}%
\end{equation}
and
\begin{align}
\mathcal{E}_{\Pi_{\vec{n}}^{(\pm)}}(\rho_{1},\rho_{2}|q_{1},q_{2}%
)|_{_{\gamma_{1},\gamma_{2}}}  &  =\max\left\{  q_{1}\mathrm{tr}\left\{
\Lambda_{1}[\rho_{1}]\text{ }\mathrm{E}_{\sigma_{\vec{n}}}^{(\pm)}\right\}
,q_{2}\mathrm{tr}\left\{  \Lambda_{1}[\rho_{2}]\text{ }\mathrm{E}%
_{\sigma_{\vec{n}}}^{(\mp)}\right\}  \right\} \tag{A14}\label{A14}\\
&  \leq\max\mathcal{\{}q_{1},q_{2}\},\nonumber
\end{align}
otherwise.

Taking relations (\ref{A12}) and (\ref{A14}) into account in expression
(\ref{A11}), we derive%
\begin{equation}
\mathbb{P}_{A|\leadsto1\leadsto2}^{opt.success}\left(  \rho_{1},\rho_{2}%
|q_{1},q_{2}\right)  |_{_{\gamma_{_{1}},\gamma_{_{2}}}}=\max\left\{
\max_{\vec{n}\in J_{1}}\{(1-\frac{\gamma_{2}}{2})\mathbb{P}_{\Pi_{\vec{n}%
}^{(\pm)}}^{success}(\rho_{1},\rho_{2}|q_{1},q_{2})|_{\gamma_{1}}\},\text{
}\max\mathcal{\{}q_{1},q_{2}\}\right\}  \tag{A15}\label{A15}%
\end{equation}
where $J_{1}$ is the set of vectors $\vec{n}\in\mathbb{R}^{3},$ satisfying
relation (\ref{A13}). Note that since by (\ref{A5})%
\begin{equation}
\mathbb{P}_{\Pi_{\vec{n}}^{(\pm)}}^{success}(\rho_{1},\rho_{2}\mid q_{1}%
,q_{2})|_{_{\gamma_{1}}}=\frac{1}{2}\pm\frac{1}{2}(1-\gamma_{1})(q_{1}\vec
{r}_{1}-q_{2}\vec{r}_{2})\cdot\vec{n}, \tag{A16}\label{A16}%
\end{equation}
the maximum of this expression over $\vec{n}\in J_{1}$ is attained at the unit
vectors
\begin{equation}
\vec{n}_{\pm}=\frac{\pm(q_{1}\vec{r}_{1}-q_{2}\vec{r}_{2})}{\left\Vert
q_{1}\text{ }\vec{r}_{1}-q_{2}\text{ }\vec{r}_{2}\right\Vert _{\mathbb{R}^{3}%
}}, \tag{A17}\label{A17}%
\end{equation}
respectively, and is equal to
\begin{align}
\max_{\vec{n}\in J_{1}}\left\{  \mathbb{P}_{\Pi_{\vec{n}}^{(\pm)}}%
^{success}(\rho_{1},\rho_{2}|q_{1},q_{2})|_{\gamma_{1}}\right\}   &
=\mathbb{P}_{\Pi_{\vec{n}_{+}}^{(+)}}^{success}\left(  \rho_{1},\rho_{2}%
|q_{1},q_{2}\right)  |_{_{_{\gamma_{1}}}}=\mathbb{P}_{\Pi_{\vec{n}_{-}}^{(-)}%
}^{success}\left(  \rho_{1},\rho_{2}|q_{1},q_{2}\right)  |_{_{_{\gamma_{1}}}%
}\tag{A18}\label{A18}\\
&  =\frac{1}{2}+\frac{1}{2}(1-\gamma_{1})\left\Vert q_{1}\vec{r}_{1}-q_{2}%
\vec{r}_{2}\right\Vert _{\mathbb{R}^{3}}.\nonumber
\end{align}

From Eqs. (\ref{A15}) and (\ref{A18}) it follows that, under a depolarizing
noise, \emph{the optimal two-sequential success probability }$\mathbb{P}%
_{A|\leadsto1\leadsto2}^{opt.success}\left(  \rho_{1},\rho_{2}|q_{1}%
,q_{2}\right)  |_{_{\gamma_{_{1}},\gamma_{_{2}}}}$

\begin{itemize}
\item equals to%
\begin{align}
&  \max\left\{  \left(  1-\frac{\gamma_{2}}{2}\right)  \mathbb{P}_{\Pi
_{\vec{n}_{+}}^{(+)}}^{success}\left(  \rho_{1},\rho_{2}|q_{1},q_{2}\right)
|_{_{_{\gamma_{1}}}},\max\{q_{1},q_{2}\}\right\} \tag{A19}\label{A19}\\
&  =\max\left\{  \left(  1-\frac{\gamma_{2}}{2}\right)  \left(  \frac{1}%
{2}+\frac{1}{2}(1-\gamma_{1})\left\Vert q_{1}\vec{r}_{1}-q_{2}\vec{r}%
_{2}\right\Vert _{\mathbb{R}^{3}}\right)  ,\text{ }\max\{q_{1},q_{2}\}\right\}
\nonumber
\end{align}
for all $\gamma_{1},\gamma_{2}$ satisfying relation (\ref{A13}) with $\vec
{n}=\vec{n}_{+}$;

\item equals to
\begin{equation}
\max\{q_{1},q_{2}\} \tag{A20}\label{A20}%
\end{equation}
for all otherwise $\gamma_{1},\gamma_{2}.$
\end{itemize}

Note that%
\begin{equation}
\mathbb{P}_{A|\leadsto1}^{opt.success}(\rho_{1},\rho_{2}|q_{1},q_{2}%
)=\max\left\{  \mathbb{P}_{\Pi_{\vec{n}_{+}}^{(+)}}^{success}\left(  \rho
_{1},\rho_{2}|q_{1},q_{2}\right)  |_{_{_{\gamma_{1}}}},\max\{q_{1}%
,q_{2}\}\right\}  \tag{A21}\label{A21}%
\end{equation}
where $\mathrm{P}_{\Pi_{\vec{n}_{+}}^{(+)}}^{success}|_{_{_{\gamma_{1}}}}$ is
given by (\ref{A5}) with $\vec{n}=\vec{n}_{+}$

For a noisy parameter $\gamma_{2},$ let us now specify the range where the
optimal success probability $\mathbb{P}_{A|\leadsto1\leadsto2}^{opt.success}%
|_{_{\gamma_{_{1}},\gamma_{_{2}}}}$ is given by expression (\ref{A19}).

From relation (\ref{A13}), specified for $\vec{n}=\vec{n}_{+}$, it follows
that, \emph{for a given} $\gamma_{1},$ expression (\ref{A19}) holds for all
$\gamma_{2}\leq\gamma_{2}^{(1)}$ where%
\begin{align}
1-\frac{\gamma_{2}^{(1)}}{2}  &  =\frac{\max\left\{  q_{1}\mathrm{tr}\left\{
\Lambda_{1}[\rho_{1}]\text{ }\mathrm{E}_{\sigma_{\vec{n}}}^{(\pm)}\right\}
,q_{2}\mathrm{tr}\left\{  \Lambda_{1}[\rho_{2}]\text{ }\mathrm{E}%
_{\sigma_{\vec{n}}}^{(\mp)}\right\}  \right\}  }{\mathbb{P}_{\Pi_{\vec{n}_{+}%
}^{(+)}}^{success}\left(  \rho_{1},\rho_{2}|q_{1},q_{2}\right)  |_{_{_{\gamma
_{1}}}}}\tag{A22}\label{A22}\\
&  \leq\frac{\max\{q_{1},q_{2}\}}{\mathbb{P}_{\Pi_{\vec{n}+}^{(+)}}%
^{success}\left(  \rho_{1},\rho_{2}|q_{1},q_{2}\right)  |_{_{_{\gamma_{1}}}}%
},\nonumber
\end{align}
while the expression (\ref{A20}) is true for all $\gamma_{2}>\gamma_{2}%
^{(1)}.$

On the other hand, from (\ref{A19}) it follows that $\mathbb{P}_{A|\leadsto
1\leadsto2}^{opt.success}\left(  \rho_{1},\rho_{2}|q_{1},q_{2}\right)
|_{_{\gamma_{_{1}},\gamma_{_{2}}}}$ equals to
\begin{equation}
\left(  1-\frac{\gamma_{2}}{2}\right)  \mathbb{P}_{\Pi_{\vec{n}_{+}}^{(+)}%
}^{success}\left(  \rho_{1},\rho_{2}|q_{1},q_{2}\right)  |_{_{_{\gamma_{1}}}}
\tag{A23}\label{A23}%
\end{equation}
for all $\gamma_{2}\leq\gamma_{2}^{(2)}$ where
\begin{equation}
1-\frac{\gamma_{2}^{(2)}}{2}=\frac{\max\{q_{1},q_{2}\}}{\mathbb{P}_{\Pi
_{\vec{n}_{+}}^{(+)}}^{success}\left(  \rho_{1},\rho_{2}|q_{1},q_{2}\right)
|_{_{_{\gamma_{1}}}}}, \tag{A24}\label{A24}%
\end{equation}
and is equal to $\max\{q_{1},q_{2}\}$ for all $\gamma_{2}>\gamma_{2}^{(2)}.$

Comparing (\ref{A22}) and (\ref{A24}), we conclude%
\begin{equation}
\left(  1-\frac{\gamma_{2}^{(1)}}{2}\right)  \leq\left(  1-\frac{\gamma
_{2}^{(2)}}{2}\right)  \Leftrightarrow\gamma_{2}^{(1)}\geq\gamma_{2}^{(2)},
\tag{A25}\label{A25}%
\end{equation}
so that besides $\gamma_{2}>\gamma_{2}^{(1)}$ the equality (\ref{A20}%
$\mathrm{)}$ is also true for all $\gamma_{2}^{(2)}\leq\gamma_{2}\leq
\gamma_{2}^{(1)}$.

Taking this into account for analyzing the validity of expressions (\ref{A19})
and (\ref{A20}), we come to the statement of Theorem 4.

\end{document}